\documentclass[12pt]{article}
\usepackage{amsfonts,amsmath,amssymb,amsthm}
\usepackage{graphicx} 
\usepackage{float}
\usepackage{caption}
\usepackage{subcaption}
\usepackage{graphicx,comment}

\usepackage{fullpage}
\usepackage[margin=1in]{geometry}
\usepackage{xcolor}
\usepackage{hyperref}
\usepackage[capitalize]{cleveref}
\usepackage{authblk}
\usepackage{xurl}

\usepackage[numbers,sort&compress]{natbib}

\newtheorem{defn}{Definition}

\newtheorem*{rem}{Remark}
\newtheorem{lemma}{Lemma}

\newenvironment{myproof}{
  \par\medskip\noindent
  \textit{Proof}.
}{
\newline
\rightline{$\qedsymbol$}
}

\def\eps{\varepsilon}

\def\Pois{\operatorname{Pois}}
\def\sign{\text{sign}}

\newtheorem{theorem}{Theorem}

\def\fp{\rho} %
\def\fps{\rho^R} %

\def\ep{\psi} %
\def\eps{\psi^R} %

\def\p{p}
\def\q{q}
\def\bias{\gamma}
\def\ps{\p^R}
\def\qs{\q^R}
\def\biass{\bias^R}

\def\SI{Appendix}

\title{The evolutionary advantage of replacers in the Moran process
}
\def\maintitle{The evolutionary advantage of replacers in the Moran process
}

\author[1]{Michal Pecho}
\author[1]{Josef Tkadlec}
\author[2,3]{Martin A. Nowak}

\affil[1]{Computer Science Institute, Charles University, Prague, Czech Republic}

\affil[2]{Department of Organismic and Evolutionary Biology, Harvard University, Cambridge, MA02138, USA}
\affil[3]{Department of Mathematics, Harvard University, Cambridge, MA02138, USA}

\date{}

\usepackage{lineno}
\begin{document}

\maketitle

\begin{abstract}
Evolution occurs in populations of reproducing individuals. 
In stochastic descriptions of evolutionary dynamics, such as the Moran process, individuals are chosen randomly for birth and for death.
If the same type is chosen for both steps, then the reproductive event is wasted, because the composition of the population remains unchanged.
Here we introduce a new phenotype, which we call a \textit{replacer}.
Replacers are efficient competitors.
When a replacer is chosen for reproduction,  the offspring will always replace an individual of another type (if available).
We determine the selective advantage of replacers in well-mixed populations and on one-dimensional lattices.
We find that being a replacer substantially boosts the fixation probability of neutral and deleterious mutants.
In particular, fixation probability of a single neutral replacer who invades a well-mixed population of size $N$ is of the order of $1/\sqrt N$ rather than the standard $1/N$.
Even more importantly, replacers are much better protected against invasions once they have reached fixation.
Therefore, replacers dominate the mutation selection equilibrium even if the phenotype of being a replacer comes at a substantial cost:
curiously, for large population sizes and small mutation rates, the relative reproductive rate of a successful replacer can be as low as $1/e$.

\end{abstract}

\section*{Introduction}
The long-term fate of biological populations is shaped by evolutionary forces~\cite{ewens2004mathematical,barton2007evolution}.
Mutation creates new types.
Selection acts on those types.
Stochastic evolutionary dynamics is often described by variations of the Moran process~\cite{moran1958random,kimura1968evolutionary,Maruyama1974a}.
The population consists of $N$ individuals. 
Each type of individual has a certain reproductive rate.
The process proceeds in discrete time steps.
In each step, one individual is selected as a parent, with a probability proportional to its reproductive rate.
This parent then produces an offspring that migrates and replaces another individual in the population.

If we consider the Moran process for very low mutation rate we often ask: 
What is the probability that a single mutant with relative reproductive rate $r$ successfully invades and takes over a population consisting of $N-1$ residents with reproductive rate 1? \cite{kimura1962probability,maruyama1970fixation,whitlock2003fixation,patwa2008fixation}
This fixation probability depends on the values of $r$ and $N$.
For a neutral mutant ($r=1$) this fixation probability is $1/N$.
In general, the fixation probability is $(1-1/r)/(1-1/r^N)$.
Hence, for disadvantageous mutants ($r<1$), the fixation probability is exponentially small in $N$. For advantageous mutants ($r>1$), the fixation probability tends to the constant $1-\frac1r$ as the population size $N$ becomes large.
Thus, deleterious mutations are unlikely to fix in large populations, while beneficial mutations are reasonably likely to fix.
Note that fixation is never certain -- even beneficial mutants could be lost to random drift, especially if they are initially present in the population at low frequency~\cite{durrett1994importance,nowak2006evolutionary,nowak2010evolutionary}. 

The Moran process allows us to study competing phenotypes that differ in aspects other than a constant reproductive rate.
For example, selection could be frequency dependent;
then we study evolutionary game dynamics in finite populations~\cite{ohtsuki2006simple,nowak2006five,szabo2007evolutionary,santos2008social,broom2010evolutionary,szolnoki2014cyclic,allen2017evolutionary,perc2017statistical,su2019evolutionary,mcavoy2020social,broom2022game,su2023strategy,wang2024evolutionary}.
Or the population could reside on a certain structure; then we study evolutionary graph theory~\cite{lieberman2005evolutionary,broom2008analysis,diaz2014approximating,tkadlec2019population,tkadlec2021fast,diaz2021survey,kopfova2025colonization}.
Other possibilities are that the reproductive rates are location dependent~\cite{bulmer1972multiple,maciejewski2014environmental,svoboda2023coexistence} or that individuals differ in mutation rates or migration patterns~\cite{yeaman2011establishment,allen2012mutation,kaveh2019environmental,tkadlec2023evolutionary}.

Here we consider the evolutionary dynamics of a phenotype which we call a \textit{replacer}.
In the standard Moran process, the offspring of the reproducing parent might end up replacing an individual that has the same type as the parent.
Whenever this occurs, the reproductive step is effectively wasted, since the population composition has not changed (only time was gained).
We say that a phenotype is a replacer, if it avoids those situations whenever possible.
Thus, when a replacer is reproducing, the offspring replaces a random individual among those that have a different type than the parent.
It replaces the ``other type'' whenever such an individual is available (and within reach). 

While we are primarily interested in the evolution of replacers for the simplicity and elegance of the mathematical model, 
the replacer phenotype also occurs naturally.
We give examples of three scenarios.
First, during cancer evolution in Drosophila, the dMyc transcription factor transforms cells into super-competitors who actively trigger apoptosis in those adjacent that are not themselves super-competitors~\cite{moreno2004dmyc}.
Thus, the super-competitor cells behave like replacers.
Second, during fungal competition, hyphae may preferentially direct antagonistic growth toward neighboring genetically distinct hyphae, inhibiting, killing, or overgrowing them through hyphal interference and necrotrophic mycoparasitism~\cite{boddy2016fungal,jeffries1995biology}.
Third, in cultural evolution, e.g. in opinion dynamics, when attempting to spread their opinions, humans naturally concentrate their efforts on the contacts who have the other opinion (and avoid ``preaching to the choir'').

It is clear that being a replacer should be an advantageous trait.
But how much selective advantage is gained?
To answer this question, we calculate three measures of success for a mutant who is a replacer.
First, we calculate \textit{fixation probability}.
Being a replacer should increase the fixation probability.
Second, we calculate the \textit{elimination probability}, that is, the probability that a population consisting of $N-1$ individuals gets wiped out by an invader.
Being a replacer should decreases the elimination probability.
Third, we calculate the expected abundance in a mutation-selection equilibrium~\cite{antal2009mutation,tarnita2009mutation}.
Being a replacer should increase the expected abundance.

For well-mixed populations of size $N$, we find that the fixation probability of a neutral replacer is proportional to $1/\sqrt{N}$.
This quantity is substantially larger than the $1/N$ fixation probability of a neutral non-replacer.
For non-neutral mutants, the difference in the fixation probability is less pronounced:
For disadvantageous mutants ($r<1$), the fixation probability is still exponentially small in $N$. 
For advantageous mutants ($r>1$), the fixation probability tends to the same constant $1-\frac1r$ as for the non-replacers, when the population size $N$ becomes large.
However, we find that replacers get a major boost in terms of the elimination probability.
In particular, while disadvantageous non-replacers are eliminated with constant probability, for replacers the elimination probability is exponentially small. 
Thus, once replacers become established, they are much better protected against further invasions, even when they have a diminished reproductive rate.
As a consequence, we show that replacers dominate in the mutation-selection equilibrium even when they are disadvantageous.
Curiously, for large population size and small mutation rates, the relative reproductive rate of a successful replacer can be as low as $1/e\doteq 0.37$.
Moreover, those effects occur (and are often even more pronounced) for small population sizes $N$, or when the underlying spatial structure is a one-dimensional lattice rather than a well-mixed population.
In particular, on a large one-dimensional lattice the fixation probability of a neutral replacer approaches the value $1/3$, and the replacers dominate the mutation-selection equlibrium at low mutation rates even when their relative reproductive rate is as low as $0.5$.

\section*{Results}

\subsection*{Model}
Here we describe in detail the Moran process that governs the evolutionary dynamics, the phenotype of replacers, and the key quantities of fixation and elimination probability that characterize the evolutionary dynamics.

\paragraph{Evolutionary dynamics: Moran process.}
Throughout most of the paper, we consider a competition between mutants and residents in a well-mixed population of size $N$.
Residents have normalized reproductive rate 1, while mutants have relative reproductive rate $r$.
Mutations may be neutral ($r=1$), beneficial/advantageous ($r>1$), or deleterious/disadvantageous ($r<1$).
The evolutionary dynamics is governed by the \textit{Moran Birth-death process}.
The basic setting is as follows:
The evolution occurs in discrete time-steps.
In each step, one individual is selected for reproduction, where the probability of each individual being selected is proportional to their reproductive rate.
Then, another individual is chosen for death uniformly at random, and the dying individual is replaced by the copy of the reproducing individual.
When the reproducing and the dying individual are of the same type, nothing changes; otherwise the reproducing type becomes more frequent.
Such steps are repeated until all individuals in the population are of the same type (either all mutants or all residents).
This basic setting can be extended in several ways, see below.

\paragraph{Replacers.}
In this work, we consider mutants who are replacers. When a replacer is selected for reproduction, the dying individual is not chosen randomly from among all the other individuals, but only from among all the residents. Thus, a reproducing replacer never ``wastes its turn'', and will always cause the replacer frequency to increase.
For clarity, mutants who are not replacers are called \textit{oblivious} mutants.

\begin{figure}[h!]
    \centering
    \includegraphics[width=1\linewidth]{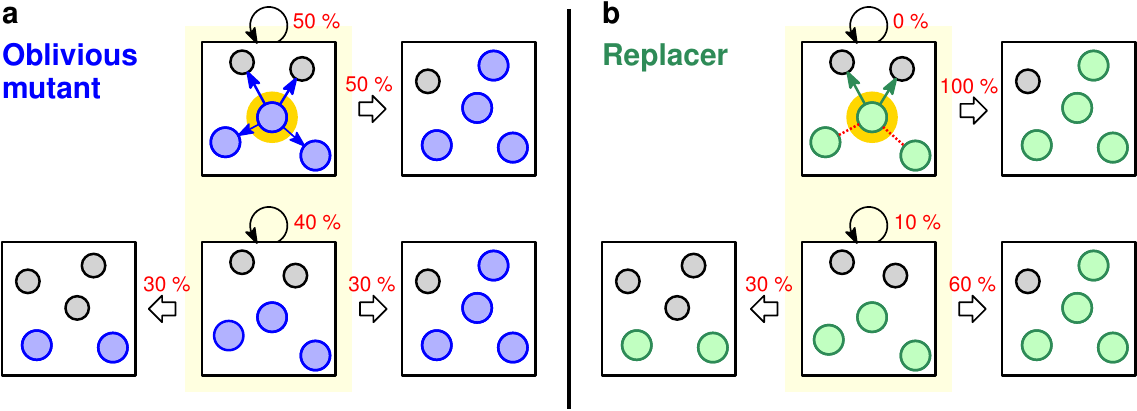}
    \caption{\textbf{Moran process with oblivious mutants and with replacers.} 
    In each step of the Moran Birth-death process, first a random individual is selected for reproduction, and then it displaces another individual. 
    \textbf{a,} In the standard Moran process, the mutants (blue) are oblivious, so a reproducing mutant always displaces another individual uniformly at random, including both mutants and residents (grey).
    Top row: When the middle mutant is reproducing (yellow circle), with probability 50~\% the mutant abundance increases and with probability 50~\% it stays the same.
    Bottom row: Taking into account all possible reproducing individuals gives the depicted percentages to increase, decrease, or maintain the mutant abundance.
    \textbf{b,} In contrast, the replacers (green) never ``waste their turn'', always replacing a random resident rather than a random individual.
    Top row: When the middle replacer is reproducing (yellow circle), the replacer abundance increases with probability 100~\%.
    Bottom row: Taking into account all possible reproducing individuals, the percentages to increase, decrease, or maintain the replacer abundance become as shown.
    }
    \label{fig:model}
\end{figure}

\paragraph{Key quantities: Fixation and elimination probability.}
To characterize the long-term fate of an evolving population we study the following two classic quantities. First, the \textit{fixation probability}, denoted $\fp_r(N)$, is the probability that Moran process starting with a single invading mutant with reproductive rate $r$ ends with all individuals being mutants. 
Fixation probability measures the ability of mutants to establish themselves.

Second, the \textit{elimination probability}, denoted $\ep_r(N)$, is the probability that Moran process starting with $N-1$ mutants ends with all individuals being residents.
Elimination probability quantifies the extent to which mutant populations are protected from subsequent resident invasions.

When the mutant is a replacer, we denote the fixation probability by $\fps_r(N)$ and the elimination probability by $\eps_r(N)$. When the mutant is neutral, we use a shorthand notation $\fp(N)$ to mean $\fp_{r=1}(N)$.

\paragraph{Extension: Mutation-selection process and lattice spatial structure.}
We consider two extensions of the standard Moran process.

First, we consider the mutation-selection process. 
In the mutation-selection process with a mutation rate $u\in(0,1)$, during each reproductive event there is a probability $u$ that a mutation occurs and the offspring of the reproducing individual becomes the other type than the parent. 
Therefore, no type ever gets established forever, and instead of studying fixation and elimination probabilities, we measure the average number (abundance) of mutants in the long term.

Second, Moran process can be adapted to accommodate spatial structures.
Here we consider a special case of a one-dimensional lattice with periodic boundary condition. That is, there are $N$ sites arranged along a cycle $C_N$, and any time an individual reproduces, the offspring migrates to one of the two neighboring sites.
If the reproducing individual is \textit{not} a replacer, the offspring migrates to a neighbor selected uniformly at random.
If the reproducing individual is a replacer, the offspring migrates to a random resident neighbor (or to a random neighbor, if no resident neighbor exists). The corresponding fixation and elimination probabilities are denoted $\fp_r(C_N)$, $\ep_r(C_N)$ for oblivious mutants, and $\fps_r(C_N)$, $\eps_r(C_N)$ for replacers.

\paragraph{Asymptotic notation.} To concisely express results about the fixation probability $\fp(N)$ and the elimination probability $\ep(N)$ for large population sizes $N$, we employ the following standard notation~\cite{cormen2022introduction}.
We write $A(N)\approx B(N)$ to denote that $A(N)$ and $B(N)$ are asymptotically equal, that is, $\lim_{N\to\infty} \frac{A(N)}{B(N)}=1$.
We write $A(N)\sim B(N)$ to denote that $A(N)$ is proportional to $B(N)$, that is, there exist positive constants $c_1$, $c_2$ such that $c_1\le \frac{A(N)}{B(N)}\le c_2$.
We write $A(N)\lesssim B(N)$ to denote that $A(N)$ is less than or proportional to $B(N)$, that is, there exists a positive constant $c_2$ such that $\frac{A(N)}{B(N)}\le c_2$.
So for example if $\fp(N)=\frac2{N-1}$ then we would write $\fp(N)\approx\frac2N$ and $\fp(N)\sim \frac1N$.

\subsection*{Neutral evolution}
In a well-mixed population of size $N$, the situation at any given time point of the evolutionary dynamics is completely described by just the number~$i$ of invading mutants.
To track how the number of mutants changes in time, for each $1\le i\le N-1$ we denote by $\p_i$ (resp.\ $\q_i$) the probability that the number of mutants increases (resp.\ decreases) in a single step of the Moran process.
Many aspects of the evolutionary process depend only on the ratios $\bias_i=\p_i/\q_i$ of those two probabilities. We call this ratio $\bias_i$ the \textit{forward bias}.

The forward bias $\bias_i$ in a well-mixed population can be expressed explicitly, for any population size $N$ and any number $i$ of mutants. %
When the mutants are neutral and both types of individuals are oblivious, %
standard calculation yields
\[ \p_i = \frac{i}{N}\cdot\frac{N-i}{N-1} \quad\text{and}\quad \q_i = \frac{N-i}{N}\cdot\frac{i}{N-1},
\]
thus $\bias_i = \frac{\q_i}{\p_i} = 1$, regardless of $N$ and $i$.
In other words, when both types are oblivious, the size of the mutant subpopulation is always as likely to increase as it is to decrease.

In contrast, when mutants are replacers, an analogous calculation yields
\[ \ps_i = \frac{i}{N}\cdot1 \quad\text{and}\quad \qs_i =\q_i= \frac{N-i}{N}\cdot\frac{i}{N-1},
\]
thus $\biass_i = \frac{\ps_i}{\qs_i} = \frac{N-1}{N-i}$.
Therefore, in this setting the forward bias of replacers increases with $i$.
In particular, for $i=1$ we have $\biass_1=1 =\bias_1$, meaning that being a replacer does not bring any advantage if there is only a single mutant.
At the other extreme, for $i=N-1$ we have $\biass_{N-1}=N-1$, so being a replacer helps immensely.
See \cref{fig:i} for a direct comparison.

\begin{figure}[h!]
    \centering
    \includegraphics[width=0.65\linewidth]{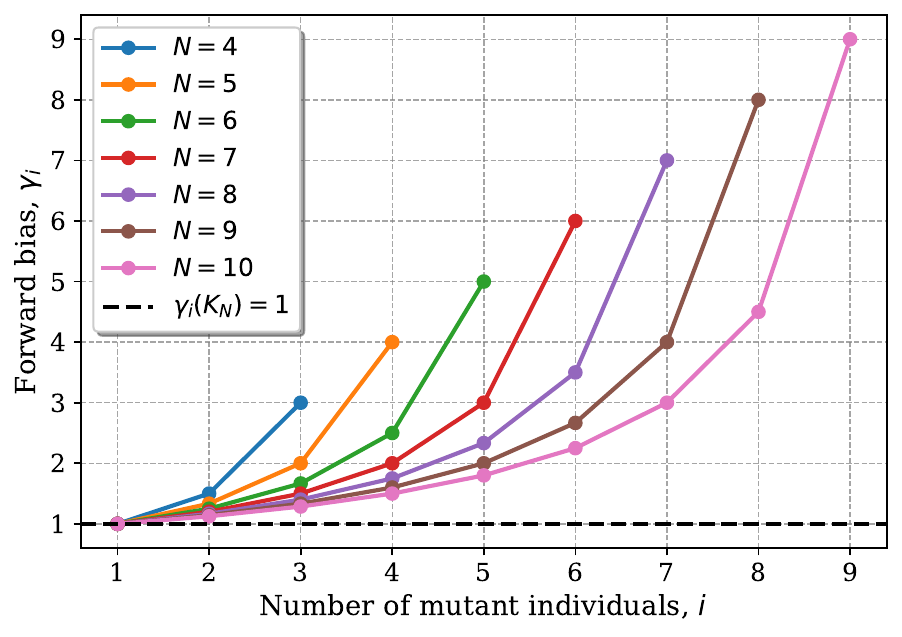}
    \caption{\textbf{Forward biases in well-mixed populations.}
    In the standard Moran process with $i$ oblivious mutants, the number of mutants is as likely to increase as it is to decrease, that is, the forward bias satisfies $\bias_i=1$ (black, dashed).
    When the mutants are replacers, the forward bias in a population of size $N$ satisfifes $\biass_i=\frac{N-1}{N-i}$, so it increases with the number $i$ of mutants, reaching up to $\biass_{N-1}=N-1$.
    Here population sizes are $N\in\{4,5,\dots,10\}$ (colors) and the number of mutants $i$ varies from 1 to $N-1$. %
    }
    \label{fig:i}
\end{figure}

The above differences have a profound effect on the fixation probability of a single mutant who attempts to invade a population of size $N$.
When the mutant is oblivious, its fixation probability is known to satisfy $\fp(N)=1/N$~\cite{nowak2006evolutionary}.
When the mutant is a replacer, we show that the fixation probability $\fps(N)$ is instead proportional to $1/\sqrt{N}$.

\begin{theorem}[Neutral fixation probability]\label{thm:fp-r1} In a well-mixed population of size $N$, the fixation probability $\fps$ of a single replacer satisfies $\fps(N)\approx \sqrt{\frac{2}{\pi\cdot N}}$.
\end{theorem}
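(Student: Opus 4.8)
The plan is to start from the classical fixation-probability formula for a one-dimensional birth--death chain and then evaluate the resulting sum asymptotically using Stirling's formula together with a Poisson central-limit argument.

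First I would recall that for the Moran process on states $0,1,\dots,N$ (tracking the number of mutants), with absorbing states $0$ and $N$, the fixation probability from a single mutant is determined entirely by the forward biases $\biass_i=\ps_i/\qs_i$. Setting up the first-step recurrence $\ps_i(x_{i+1}-x_i)=\qs_i(x_i-x_{i-1})$ for the absorption probabilities $x_i$, telescoping the differences $d_i=x_i-x_{i-1}$, and using $d_{i+1}=d_i/\biass_i$ together with $\sum_{i=1}^{N}d_i=1$ yields
\[
  \fps(N)=\frac{1}{\,1+\sum_{k=1}^{N-1}\prod_{m=1}^{k}\frac{1}{\biass_m}\,}.
\]

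Next I would substitute the neutral replacer bias $\biass_m=\frac{N}{N-m}$, so that $1/\biass_m=\frac{N-m}{N}$ and each product collapses into a ratio of factorials, $\prod_{m=1}^{k}\frac{N-m}{N}=\frac{(N-1)!}{(N-1-k)!\,N^{k}}$. Writing the full denominator as $S=\sum_{k=0}^{N-1}\frac{(N-1)!}{(N-1-k)!\,N^{k}}$ and re-indexing via $j=N-1-k$ exposes a truncated exponential series:
\[
  S=\frac{(N-1)!}{N^{N-1}}\sum_{j=0}^{N-1}\frac{N^{j}}{j!}.
\]
The two factors are then estimated separately. For the prefactor, Stirling's formula gives $\frac{(N-1)!}{N^{N-1}}\approx\sqrt{2\pi N}\,e^{-N}$. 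For the sum, I would observe that $e^{-N}\sum_{j=0}^{N-1}\frac{N^{j}}{j!}=\Pr[\Pois(N)\le N-1]$, and since $\Pois(N)$ has mean and variance both equal to $N$, the central limit theorem forces this probability to converge to $\tfrac12$, so $\sum_{j=0}^{N-1}\frac{N^{j}}{j!}\approx\tfrac12\,e^{N}$. Multiplying the two estimates gives $S\approx\sqrt{2\pi N}\,e^{-N}\cdot\tfrac12\,e^{N}=\sqrt{\pi N/2}$, whence $\fps(N)=1/S\approx\sqrt{\frac{2}{\pi\cdot N}}$, as claimed.

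I expect the main obstacle to be the rigorous evaluation of the truncated exponential sum $\sum_{j=0}^{N-1}N^{j}/j!$: one must justify that cutting the exponential series off essentially at the mean of the Poisson law captures asymptotically exactly half of its total mass $e^{N}$. The Poisson CLT argument above makes this precise (this is the same computation underlying the median of a Poisson distribution, a cousin of Ramanujan's classical problem); a little care is needed to check that cutting at $N-1$ rather than $N$, and the associated continuity correction of size $O(1/\sqrt{N})$, do not perturb the limit $\tfrac12$. Everything else --- the telescoping derivation of the fixation formula, the factorial simplification, and the single application of Stirling --- is routine.
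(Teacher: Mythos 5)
Your proposal is correct and follows essentially the same route as the paper's proof: both reduce the fixation probability to the reciprocal of a truncated exponential sum via the one-dimensional random-walk formula, then evaluate it with Stirling's formula and the fact that $\mathbb{P}(\Pois(N)\le N-1)\to\tfrac12$. The only cosmetic differences are that you invoke the Poisson CLT where the paper cites an Edgeworth-type expansion (which additionally gives an $\mathcal{O}(N^{-1/2})$ rate), and you use the main text's bias $\biass_i=N/(N-i)$ while the appendix works with $(N-1)/(N-i)$ (death chosen among the other $N-1$ individuals) --- neither choice affects the asymptotics.
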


Here the symbol $\approx$ means that for large $N$, the ratio of the two quantities tends to 1.
In particular, in the limit of large population size $N\to\infty$, the fixation probability $\fps(N)$ of a neutral replacer still decreases to 0, but it does so substantially more slowly than the fixation probability $\fp(N)$ of an oblivious mutant.

Another important quantity that depends only on the forward biases is the elimination probability of the mutants, once they become established. This measures the stability of the mutant population with respect to invasions of a single resident.

In the setting where both mutants and residents are oblivious, it is known that the elimination probability $\ep(N)$ 
of $N-1$ mutants, when facing a single resident, is again equal to $\ep=1/N$.
When the mutants are replacers, we show that the elimination probability $\eps(N)$ is exponentially small.

\begin{theorem}[Neutral elimination probability]\label{thm:ep-r1} In a well-mixed population of size $N$, the elimination probability $\eps(N)$ of $N-1$ replacers satisfies $\eps(N)\approx 2/e^{N-1}$.
\end{theorem}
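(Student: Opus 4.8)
The plan is to treat the dynamics as a gambler's ruin for the birth--death chain tracking the number $i$ of replacers, with absorbing states $i=0$ (all residents) and $i=N$ (all replacers), started from $i=N-1$; the elimination probability $\eps(N)$ is exactly the probability that this chain reaches $0$ before $N$. Writing $a_0=1$ and $a_j=\prod_{i=1}^{j}\frac{\qs_i}{\ps_i}=\prod_{i=1}^{j}\frac{1}{\biass_i}$ for $j\ge 1$, the standard absorption formula gives
\[
\eps(N)=\frac{a_{N-1}}{\sum_{j=0}^{N-1}a_j},\qquad\text{while}\qquad\fps(N)=\frac{1}{\sum_{j=0}^{N-1}a_j}.
\]
Thus the two quantities share the same normalizing sum and are linked by the clean identity $\eps(N)=a_{N-1}\cdot\fps(N)$, which lets me import the normalization from \cref{thm:fp-r1} rather than re-deriving it.

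First I would evaluate $a_{N-1}$. Using the neutral-replacer forward bias $\biass_i=\frac{N-1}{N-i}$ from \cref{fig:i}, equivalently the backward bias $1/\biass_i=\frac{N-i}{N-1}$, the product telescopes into a factorial:
\[
a_{N-1}=\prod_{i=1}^{N-1}\frac{N-i}{N-1}=\frac{(N-1)!}{(N-1)^{N-1}}.
\]
Applying Stirling's approximation $(N-1)!\approx\sqrt{2\pi(N-1)}\,\big((N-1)/e\big)^{N-1}$ then yields $a_{N-1}\approx\sqrt{2\pi(N-1)}\;e^{-(N-1)}$, so the exponential factor $e^{-(N-1)}$ already appears at this stage.

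Next I would combine this with \cref{thm:fp-r1}, which states $\fps(N)\approx\sqrt{2/(\pi N)}$. Multiplying the two estimates,
\[
\eps(N)=a_{N-1}\,\fps(N)\approx\sqrt{2\pi(N-1)}\;e^{-(N-1)}\cdot\sqrt{\tfrac{2}{\pi N}}=2\sqrt{\tfrac{N-1}{N}}\;e^{-(N-1)}\approx\frac{2}{e^{N-1}},
\]
since $\sqrt{(N-1)/N}\to 1$. This is the claimed asymptotic, and the factor $2$ comes precisely from the $\sqrt{2}$ in \cref{thm:fp-r1} meeting the $\sqrt{2\pi}$ of Stirling.

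The genuine analytic content is hidden inside the normalizing sum $\sum_{j=0}^{N-1}a_j$, which the argument above borrows from \cref{thm:fp-r1}. For a self-contained proof one instead reindexes the sum by $k=N-1-j$ to obtain the exact identity $\eps(N)=\big(\sum_{k=0}^{N-1}\frac{(N-1)^k}{k!}\big)^{-1}$, whose denominator is a truncated exponential series equal to $e^{N-1}\,\Pr[X\le N-1]$ for $X\sim\Pois(N-1)$. The crux, and the one step that is not purely mechanical, is then the estimate $\sum_{k=0}^{m}\frac{m^k}{k!}\approx\frac{e^{m}}{2}$, i.e.\ $\Pr[X\le m]\to\tfrac12$ for $X\sim\Pois(m)$; this follows either from the central limit theorem (the mean $m$ is asymptotically the median, with the atom $\Pr[X=m]=O(1/\sqrt m)$) or from Laplace's method applied to the incomplete-gamma integral $\frac{e^{m}}{m!}\int_{m}^{\infty}t^{m}e^{-t}\,dt$. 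Either route, together with Stirling, reproduces $\eps(N)\approx 2/e^{N-1}$.
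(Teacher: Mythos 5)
Your proposal is correct, and it takes a genuinely different route from the paper's proof. The paper derives the elimination probability from scratch: it sets up the birth--death chain that counts \emph{residents}, computes the backward biases $\gamma_k = \frac{N-1}{k}$ (for $r=1$), and obtains the exact closed form $\eps(N)=\bigl(e^{N-1}\,\mathbb{P}(\Pois(N-1)\le N-1)\bigr)^{-1}$ (its Lemma~\ref{lem:NS_pois}), then finishes with a Poisson asymptotic lemma stating $\mathbb{P}(\Pois(\lambda)\le\lambda)=\tfrac12+\mathcal{O}(\lambda^{-1/2})$. You instead exploit the exact reversal identity $\eps(N)=a_{N-1}\cdot\fps(N)$, which holds because fixation and elimination are absorption events of the \emph{same} chain started from $i=1$ and $i=N-1$ respectively, so they share the normalizing sum; then $a_{N-1}=\frac{(N-1)!}{(N-1)^{N-1}}$ plus Stirling and an appeal to \cref{thm:fp-r1} give the result. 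This is not circular (the paper's proof of \cref{thm:fp-r1} does not use \cref{thm:ep-r1}), and you correctly used the bias $\biass_i=\frac{N-1}{N-i}$ consistent with the Appendix computations rather than the main text's $\frac{N}{N-i}$, which is what makes the constant come out as $2$. What your route buys: it is shorter, it makes transparent that the factor $2$ is the same ``Poisson median $\to\tfrac12$'' fact already paid for inside \cref{thm:fp-r1}, and the identity $\eps_r = \frac{(N-1)!}{(r(N-1))^{N-1}}\,\fps_r$ in fact links the paper's Lemma~\ref{lem:sn_complete_closed} and Lemma~\ref{lem:NS_pois} for all $r$, a connection the paper never states. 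What the paper's route buys: the standalone exact formula for $\eps_r(K_N)$ is needed again for the non-neutral case (\cref{thm:elim_complete_r_small}), so in the broader architecture it cannot be skipped; your closing remark, reindexing the sum to recover $\eps(N)=\bigl(\sum_{k=0}^{N-1}\frac{(N-1)^k}{k!}\bigr)^{-1}$, is precisely the paper's derivation, so your self-contained fallback and the paper's proof coincide.
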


Thus, once established, the replacer subpopulation is extremely unlikely to be wiped out by the oblivious residents.
The results are summarized in~\cref{fig:r1}a (up to multiplicative constants) and plotted in~\cref{fig:r1}b.

\begin{figure}[h!]
    \centering
    \includegraphics[width=0.35\linewidth]{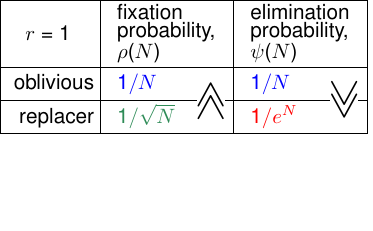}
    \includegraphics[width=0.55\linewidth]{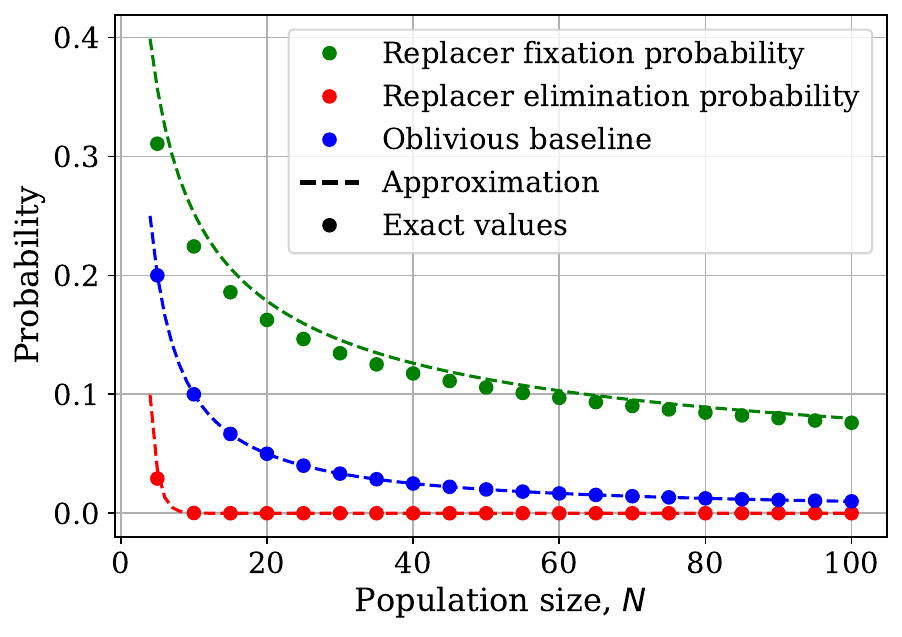}
    \caption{\textbf{Fixation and elimination probabilities of neutral replacers.}
    \textbf{a,} Neutral replacers have a fixation probability of the order of $\fps(N)\sim 1/\sqrt N$, which is substantially higher than the fixation probability $\fp(N)=1/N$ of standard (oblivious) mutants, depicted by a $\gg$ sign in the table.
    Moreover, replacers have an  exponentially small elimination probability of the order of $\eps(N)\sim 1/e^N$, which is substantially smaller than the elimination probability $\ep(N)=1/N$ of standard (oblivious) mutants, depicted by a $\ll$ sign in the table.
    \textbf{b,} While all fixation and elimination probabilities tend to 0 as the population size $N$ grows large, they do so at very different rates. For oblivious mutants both the fixation and the elimination probability are equal to $\fp(N)=\ep(N)=1/N$, which serves as a natural baseline (blue). The fixation probability $\fps(N)$ of replacers is substantially larger (green), and their elimination probability $\eps(N)$ is substantially smaller (red).
    The asymptotic formulas (dashed) closely match the exact values (dots), even for small population sizes. %
    }
    \label{fig:r1}
\end{figure}

\subsection*{Non-neutral evolution}
Not all mutations are neutral -- some increase the individual's reproductive rate while others decrease it.
The Moran process accommodates for such effects by assigning relative reproductive rate $r\ne 1$ to each mutant, as compared to a baseline reproductive rate $1$ assigned to each resident.
The case $r>1$ corresponds to advantageous (beneficial) mutations, the case $r<1$ corresponds to disadvantageous (deleterious) mutations.

When both the mutants and the residents are oblivious, standard calculation yields that the forward bias satisfies $\bias_i = \frac{\p_i}{\q_i} = r$, regardless of the population size $N$ and the number $i$ of mutants.
As a consequence, the fixation probability of a single mutant with relative reproductive rate $r$ is $\fp_r(N)=\frac{1-\frac1r}{1-\frac1{r^N}}$.
In particular, when $r>1$ and the population size $N$ is large, the fixation probability tends to a positive constant $1-1/r$.
In contrast, when $r<1$, the fixation probability is roughly $r^N$, that is, it tends to 0 exponentially quickly.
Thus, in large populations, beneficial mutations have a constant chance of fixing, whereas deleterious mutations are exponentially unlikely to fix.

Somewhat surprisingly, both those claims remain true also when the invading mutant with reproductive rate $r$ is not oblivious, but rather a replacer. We can prove the following result.

\begin{theorem}[Fixation probability for large $N$]\label{thm:fp-r}
In a large well-mixed population of size $N$, the fixation probability $\fps(N)$ of a single replacer with relative reproductive rate $r$ satisfies
\begin{align*}
\fps_r(N) &\approx 1-\frac1r \text{\ \  if $r>1$ and }\\
\fps_r(N) &\sim 
\frac{(c_r)^N}{\sqrt{N}} \text{\ \  if $r<1$},
\end{align*}
where $c_r=r\cdot e^{1-r}>r$ is a constant that depends on $r$.
\end{theorem}
Note that when $r<1$, both $\fp_r(N)$ and $\fps_r(N)$ are exponentially small in $N$.
However, the ratio $\fps_r(N)/ \fp_r(N)$ grows unbounded since $c_r/r>1$.
Thus, the fixation probability $\fp_r(N)$ of a deleterious oblivious mutant decays towards 0 exponentially faster than the fixation probability $\fps_r(N)$ of an equally deleterious replacer.

As in the case of neutral mutations, we also consider the elimination probability of $N-1$ mutants with reproductive rate $r\ne 1$.
For oblivious mutants, the elimination probability can be computed in the same fashion as the fixation probability.
In particular, when $r>1$ and the population size $N$ is large, the elimination probability of $N-1$ mutants with reproductive rate $r$ is roughly $1/r^N$, that is, it tends to 0 exponentially quickly.
In contrast, when $r<1$, the elimination probability of $N-1$ mutants with reproductive rate $r$ tends to a positive constant $1-r$.
Thus, in large populations, established beneficial mutations are strongly protected, whereas established deleterious mutations are highly susceptible to resident invasions.

In contrast, we show that mutants who are replacers, are strongly protected regardless of whether they are beneficial or deleterious.

\begin{theorem}[Elimination probability for large $N$]\label{thm:ep-r}
In a large well-mixed population of size $N\ge 3$, the elimination probability $\fps$ of a $N-1$ replacers with relative reproductive rate $r$ satisfies
\begin{align*}
\eps_r(N) &\lesssim 1/r^N\text{\ \  if $r>1$ and }\\
\eps_r(N) &\sim (c'_r)^N  \text{\ \  if $r<1$},
\end{align*}
where $c'_r=1/e^r<1$ is a constant that depends on $r$.
\end{theorem}

The results of \cref{thm:fp-r,thm:ep-r} are summarized in \cref{fig:table}.

\begin{figure}[h!]
    \centering
    \includegraphics[width=0.8\linewidth]{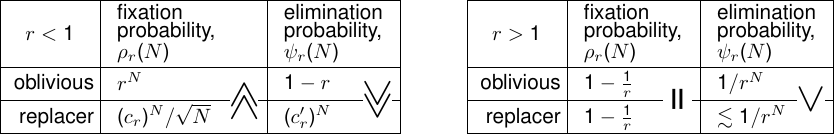}
    \caption{\textbf{Fixation and elimination probabilities of non-neutral replacers.}
    \textbf{a,} When the mutation is deleterious, the fixation probability is exponentially small regardless of whether the mutant is oblivious or a replacer. However, for replacers the base of the exponential is larger, so fixation is exponentially more likely. Perhaps more importantly, the elimination probability of replacers is also exponentially small, meaning that once established, the replacers are well protected against resident invasions. This strongly contrasts with a constant elimination probability of established oblivious mutants.
    \textbf{b,} When the mutation is beneficial, being a replacer does not affect the fixation probability when the population size $N$ is large. However the replacers again do have diminished elimination probability, as compared to oblivious mutants.
    }
    \label{fig:table}
\end{figure}

\subsection*{Mutation-selection equilibrium}
Understanding both the fixation and the elimination probability of the replacers allows us to analyze the mutation-selection process, where each time an individual reproduces, with probability $u\in(0,1)$ the offspring mutates to the opposite type than the parent.
For a given mutant reproductive rate $r$ and a given mutation rate $u$, we study the expected number $\lambda$ of mutants in the population, averaged over a long time frame. We find that replacers fare much better than the oblivious mutants with the same reproductive rate, see~\cref{fig:balance}.

\begin{figure}[h!]
    \centering
    \includegraphics[width=0.45\linewidth]{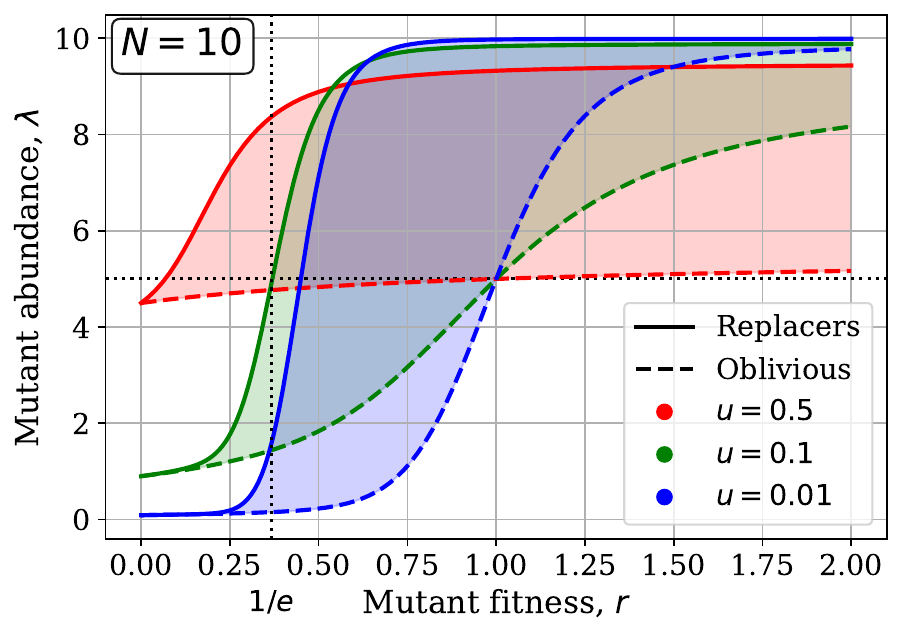}
    \includegraphics[width=0.45\linewidth]{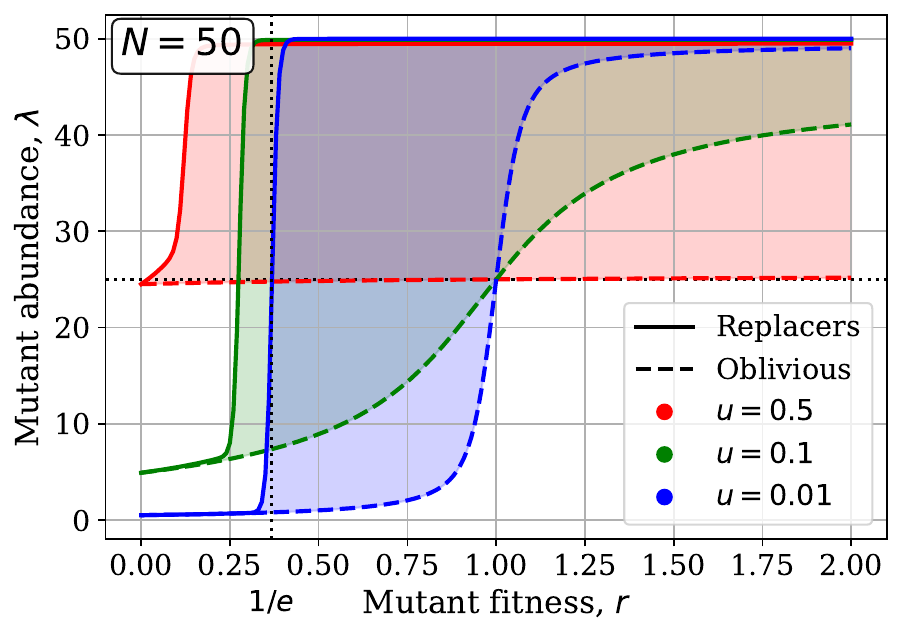}
    \caption{\textbf{Proportion of mutants in the mutation-selection process.}
    In the mutation-selection process, any time an individual reproduces, the offspring mutates to the other type with fixed probability $u\in(0,1)$.
    The long-term average mutant abundance $\lambda$ depends on the mutant reproductive rate $r$ ($x$-axis) and also on the mutation rate $u$ (colors).
    When the mutants are oblivious (dashed lines), they form a majority in the population when $r>1$, otherwise they are in a minority. 
    In contrast, replacers (solid lines) form majority even for certain values $r<1$, e.g., for $r\ge 0.46$ when $N=10$ (blue solid line vs dotted horizontal line in the left panel). The differences in abundance are indicated by the shaded areas.
    In particular, we show that when $u\to 0$ and $N$ is large, replacers dominate so long as their relative reproductive rate is above a threshold $r^\star=1/e\doteq 0.37$ (dotted line).
    For other $u$ they dominate for even smaller values of $r$.
    Here the population size is \textbf{a,} $N=10$, \textbf{b,} $N=50$.
    }
    \label{fig:balance}
\end{figure}

Note that in the limit of low mutation rate $u\to 0$, the average composition of the whole population is given by the ratio of the fixation probability and the elimination probability. When $r<1$, both those probabilities are exponentially small in the population size $N$, but the bases of the exponentials differ.
For each population size $N$, we identify the threshold value $r^\star$ that determines whether mutants or residents are more prevalent at the mutation-selection equilibrium.
For $r>r^{\star}$, the advantage obtained by being a replacer outweighs the decrease in the mutant reproductive rate and the population typically consists almost exclusively of mutants.
On the other hand, for $r<r^\star$ the decrease in the mutant reproductive rate is so large that even being a replacer can not compensate and the population consists almost exclusively of residents.

In particular, we show that for $N\to \infty$, the threshold value $r^\star$ tends to $1/e\doteq 0.37$, see \SI{} for details.
Moreover, the limit behavior kicks in early:
We prove that for $N>6$ the threshold value satisfies $r^\star<0.5$, and the right panel of \cref{fig:balance} shows that for $N=50$ the solid blue line crosses the horizontal dotted line almost exactly at $r^\star=1/e$.

\subsection*{The replacer cost}
It is conceivable that in practice the ability to express the replacer phenotype comes with an associated cost that can be represented as a reduction in the mutant reproductive rate.
Here we study how much reproductive rate is a mutant willing to sacrifice, in order to become a replacer.
More formally, what is the cost $s$ such that a replacer with relative reproductive rate $r=1-s$ performs equally well as an oblivious mutant with reproductive rate $r=1$?
In reality, this cost $s$ depends on the population size $N$, so we denote it by $s_N$.
And it also depends on whether we measure the mutant performance in terms of the fixation probability, or the elimination probability, or the abundance in the mutation-selection process.

In \cref{fig:costs} we consider those three cases.
For the fixation probability, we show that as the population size $N$ grows large, the cost $s_N$ tends to zero. 
However, we show that for a broad range of population sizes $N$ this cost is quite substantial.
In particular, the replacer cost peaks for $N=8$, %
where it satisfies $s_8>0.27$. For $N=100$ it is still substantial, satisfying $s_{100}>0.16$, see~\cref{fig:costs}a.

\begin{figure}[h!]
    \centering
\includegraphics[width=0.3\linewidth]{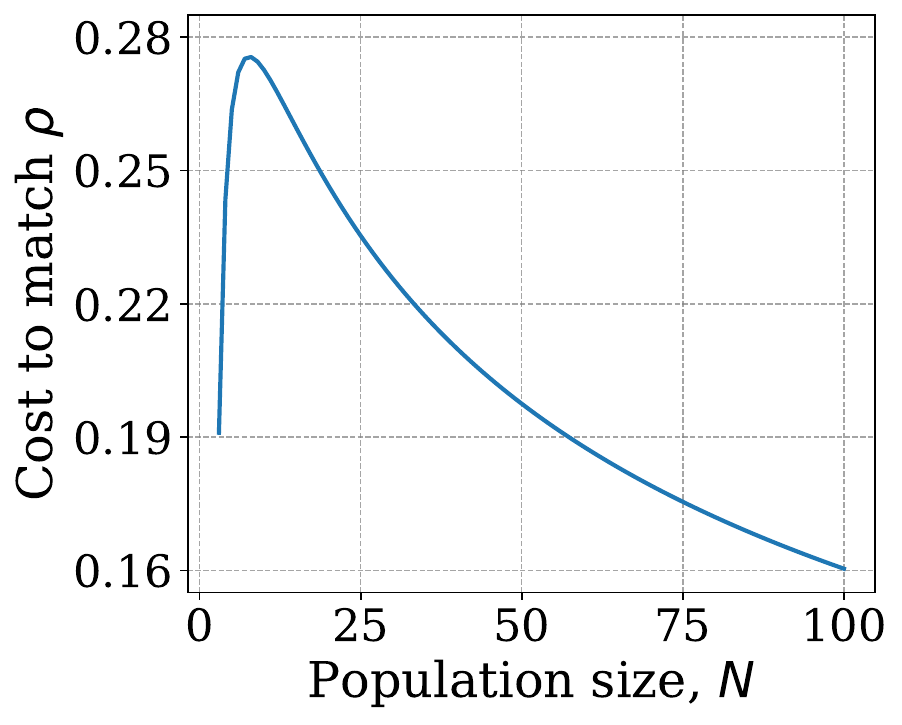}
\includegraphics[width=0.3\linewidth]{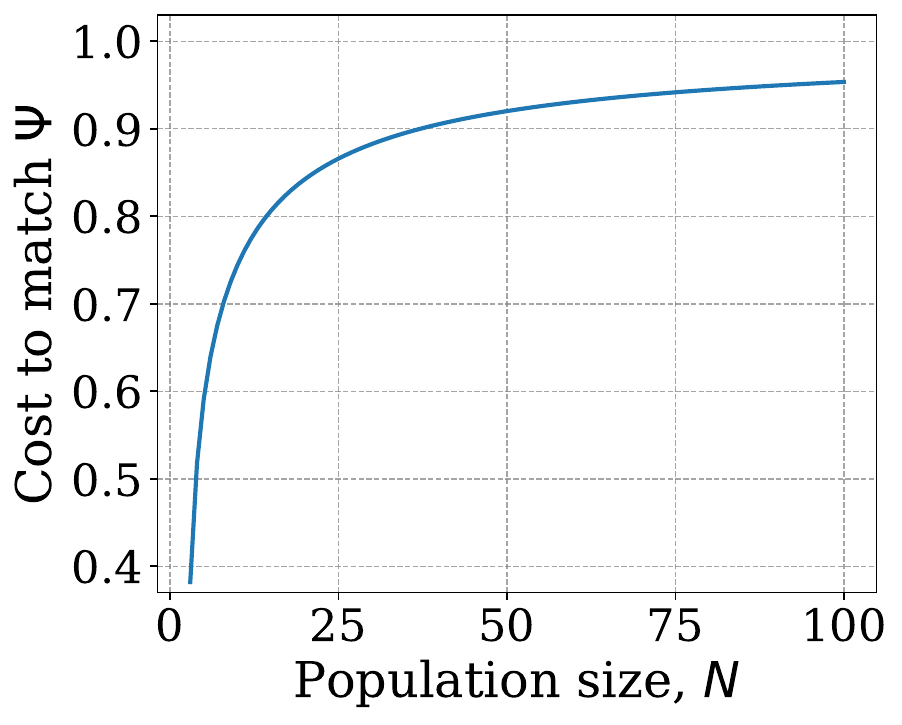}
\includegraphics[width=0.3\linewidth]{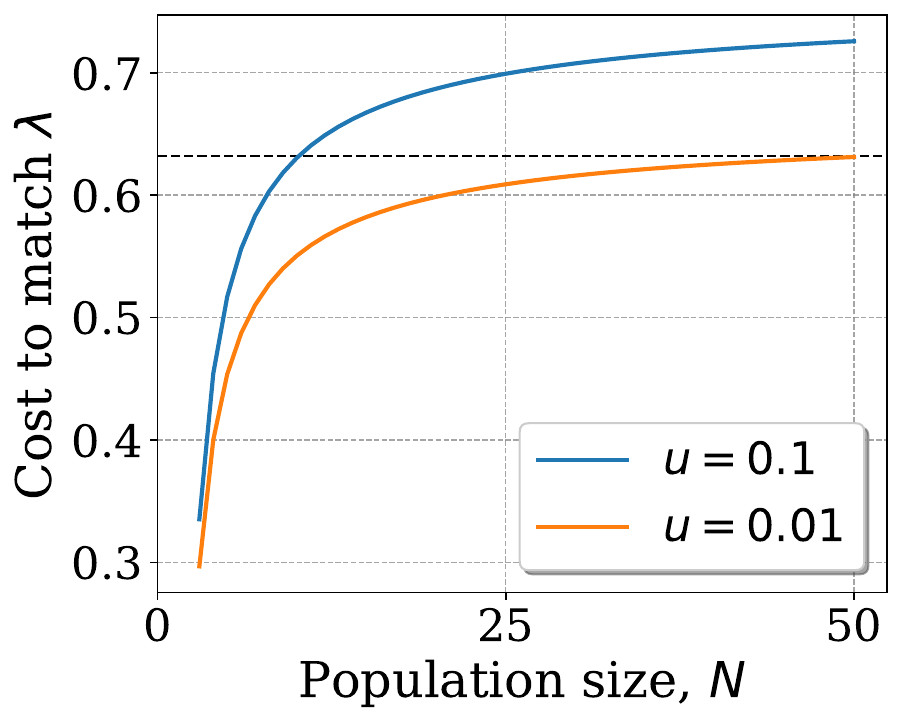}
    \caption{\textbf{The replacer cost.} What is the cost $s_N$ such that a replacer with reproductive rate $1-s_N$ achieves the same performance as a neutral oblivious mutant, in a population of size $N$? The answer depends on the measure of success.
    \textbf{a,} To match the fixation probability, a replacer is willing to decrease its reproductive rate by as much as $s_8>0.27$ when $N=8$, and by as much as $s_{100}>0.16$ when $N=100$. For large population sizes $N$, we have $s_N\to 0$.
    \textbf{b,} To match the elimination probability, a replacer is willing to sacrifice almost all its reproductive rate, and we have $s_N\to1$.
    \textbf{c,} To match the abundance $\lambda$ at the mutation-selection equilibrium, we have $s_N\to 1-1/e\doteq 0.63$ when the mutation rate $u$ is small. 
    }
    \label{fig:costs}
\end{figure}

When it comes to elimination probability, the replacer cost gradually increases as the population size $N$ grows large, and approaches a theoretical limit $s_N\to 1$.
In other words, even extremely disadvantageous replacers manage to match the elimination probability of neutral non-replacers.

Similar results hold for the mutant abundance at the mutation-selection equilibrium. Increasing the population size enables replacers to remain competitive even for larger costs. In the limit of low mutation rate $u\to 0$, the highest possible cost is $s_N\to 1-1/e\doteq 0.63$.

\subsection*{Cycle spatial structure}
All our results so far have been for a well-mixed population.
In this section, we consider a spatially structured population where the $N$ sites are arranged along a cycle $C_N$, and the offspring always migrates to one of the two adjacent sites. Thus, the spatial structure is a one-dimensional lattice with a periodic boundary condition. We denote the corresponding fixation and elimination probabilities by $\fp_r(C_N)$, $\fps_r(C_N)$, $\ep_r(C_N)$, and $\eps_r(C_N)$.

For oblivious mutants, living on a cycle $C_N$ or in a well-mixed population does not make a difference when it comes to fixation and elimination probabilities.
That is, the top rows of \cref{fig:r1,fig:table} apply to the lattice $C_N$ as well as to the well-mixed population~\cite{nowak2006evolutionary}.
In particular, the fixation and elimination probability are equal to $1/N$ when $r=1$, otherwise they either tend to a constant, or they are exponentially small as the population size $N$ becomes large.

If the mutant is a replacer, the spatial structure does make a difference.
We are able to compute the fixation and elimination probabilities exactly, see~\SI{} for details.
It turns out that the threshold value of the reproductive rate shifts to $r^\star=1/2$.
When $r^\star=1/2$, both the fixation probability $\fps_{r=1/2}(C_N)$ and the elimination probability $\eps_{r=1/2}(C_N)$ are of the order of $1/N$.
The following result lists the approximate probabilities also in the cases $r<1/2$ and $r>1/2$.

\begin{theorem}[Cycle]
    Let $C_N$ be the cycle of size $N$, where $N$ is large. Then
\begin{align*}
\fps_r(C_N) &= \frac1{2N-1}
&\textrm{and }&&
\eps_r(C_N) &= \frac1{N-\frac12}
\text{\ \  if $r=1/2$, }\\
\fps_r(C_N) &\approx 1-\frac1{r+\frac12}
&\textrm{and }&& 
\eps_r(C_N) &\sim 1/(2r)^N  
\text{\ \  if $r>1/2$, }\\
\fps_r(C_N) &\sim (2r)^N  
&\textrm{and }&&
\eps_r(C_N) &\approx 1-2r
\text{\ \  if $r<1/2$}.
\end{align*}
\end{theorem}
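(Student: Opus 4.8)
The plan is to exploit the one-dimensional geometry of $C_N$ to collapse the full configuration-space dynamics to a birth--death chain on the number $i$ of mutants, and then to read off all six quantities from the gambler's-ruin formula. Both experiments start from a configuration in which the mutants occupy a single contiguous arc: one mutant for fixation, and $N-1$ mutants (equivalently, a single resident) for elimination. Since an offspring migrates only to an adjacent site, I first establish that the mutants remain a single contiguous arc at all times. Indeed, a reproduction by an interior individual (one whose two neighbors share its type) leaves the configuration unchanged; a reproduction by a boundary mutant, which as a replacer always targets its resident neighbor, extends the arc by one site; and a reproduction by a boundary resident either retracts the arc by one site or leaves it unchanged. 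Thus the arc stays connected until absorption, and by the rotational symmetry of $C_N$ the one-step transition probabilities depend on the configuration only through the arc length $i$. Hence $(i_t)$ is a birth--death chain on $\{0,1,\dots,N\}$ with absorbing endpoints $0$ and $N$.

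Next I compute the replacer forward bias $\biass_i=\ps_i/\qs_i$ of this reduced chain, where $W=(N-i)+ir$ is the total fitness. For $2\le i\le N-1$ the two mutant--resident interfaces give $\ps_i=2r/W$ and $\qs_i=1/W$: each boundary replacer reproduces at rate $r/W$ and places its offspring on a resident neighbor with certainty, whereas a reproducing boundary resident reaches a mutant with probability $\tfrac12$ when $i\le N-2$ and with probability $1$ in the single-resident state $i=N-1$, and in both situations the contributions sum to $1/W$. Hence $\biass_i=2r$ for all $2\le i\le N-1$. The single-mutant state $i=1$ is exceptional: there is one boundary mutant but two boundary residents, giving $\ps_1=r/W$, $\qs_1=1/W$, so $\biass_1=r$. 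It is precisely this asymmetric first step that must be tracked carefully: it does not change the exponential base $2r$, but it fixes the exact constants, and in particular the exact identities at $r=\tfrac12$.

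With $\biass_1=r$ and $\biass_i=2r$ for $i\ge 2$, the partial products $\prod_{j=1}^{k}\biass_j$ equal $1,\,r,\,r(2r),\,r(2r)^2,\dots$, so their reciprocals form a geometric sequence of ratio $x:=1/(2r)$ after the initial factor $1/r$. Substituting into the standard absorption formulas
\[
\fps_r(C_N)=\frac{1}{\sum_{k=0}^{N-1}\left(\prod_{j=1}^{k}\biass_j\right)^{-1}},
\qquad
\eps_r(C_N)=\frac{\left(\prod_{j=1}^{N-1}\biass_j\right)^{-1}}{\sum_{k=0}^{N-1}\left(\prod_{j=1}^{k}\biass_j\right)^{-1}},
\]
reduces both quantities to a single geometric sum $\sum_{m=0}^{N-2}x^m$. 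It then remains to evaluate three regimes. At $r=\tfrac12$ we have $x=1$ and the sum equals $N-1$, yielding the exact values $\fps=1/(2N-1)$ and $\eps=1/(N-\tfrac12)$. For $r>\tfrac12$ we have $x<1$, so the denominator converges to a constant, giving $\fps_r\approx 1-\tfrac{1}{r+1/2}$, while the numerator of $\eps_r$ is proportional to $(2r)^{-N}$, giving $\eps_r\sim 1/(2r)^N$. For $r<\tfrac12$ we have $x>1$, so the $x^{N-1}$ term dominates the denominator, yielding $\fps_r\sim(2r)^N$ and $\eps_r\to 1-2r$.

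Once the reduction is in place the remaining work is routine geometric-series asymptotics. The conceptual crux is the contiguity invariant, which is what makes the chain one-dimensional and hence exactly solvable; the main place to slip is the boundary bookkeeping at the two extreme states, especially the exceptional bias $\biass_1=r$, since getting it wrong would spoil the exact prefactors and the critical $r=\tfrac12$ formulas while leaving the qualitative exponential picture intact.
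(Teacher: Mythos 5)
Your proposal is correct and follows essentially the same route as the paper: reduce to a birth--death chain on the arc length via the contiguity invariant, compute the interface biases (your $\biass_1=r$, $\biass_i=2r$ for $i\ge 2$ are exactly the reciprocals of the paper's $\gamma_1=1/r$, $\gamma_k=1/(2r)$), apply the gambler's-ruin formula, and finish with geometric-series asymptotics in the three regimes. The only cosmetic difference is that you read the elimination probability off the same mutant-count chain as absorption at $0$, whereas the paper relabels and tracks the spreading resident; the computations are identical.
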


In particular, for large population sizes $N\to\infty$, being a replacer on a cycle $C_N$ has a comparable effect to receiving a certain boost in terms of the reproductive rate.
When the mutant has reproductive rate $r<1/2$, an oblivious mutant would have fixation probability $\fp_{r<1/2}(C_N)\sim r^N$ (cf. \cref{fig:table}), whereas a replacer has fixation probability $\fps_{r<1/2}(C_N)\sim (2r)^N$. In other words, becoming a replacer effectively doubles the mutant's reproductive rate ($r\to 2r$).
The same boost $r\to 2r$ occurs for the elimination probability, regardless of the reproductive rate $r$.
However, when the mutant has a reproductive rate $r>1/2$ the boost for the fixation probability is different.
In that case, a replacer has fixation probability $\fps_{r}(C_N)\approx 1-\frac{1}{r+\frac12}\approx\fp_{r+\frac12}(C_N)$.
Therefore, becoming a replacer contributes an additive $+\frac12$ to the mutant's reproductive rate ($r\to r+\frac12$). 
Note that for $r\in(\frac12,1)$ and large population size $N$ an oblivious mutant has exponentially small fixation probability, whereas the fixation probability of a replacer tends to a positive constant.
See \cref{fig:boost} for an illustration.
In particular, the fixation probability $\fps_{r=1}(C_N)$ of a single neutral replacer who is invading a large cyclically structured population of size $N$, tends to $\fps_{r=1}(C_N)=1/3$.

\begin{figure}[h!]
    \centering

    \includegraphics[width=0.65\linewidth]{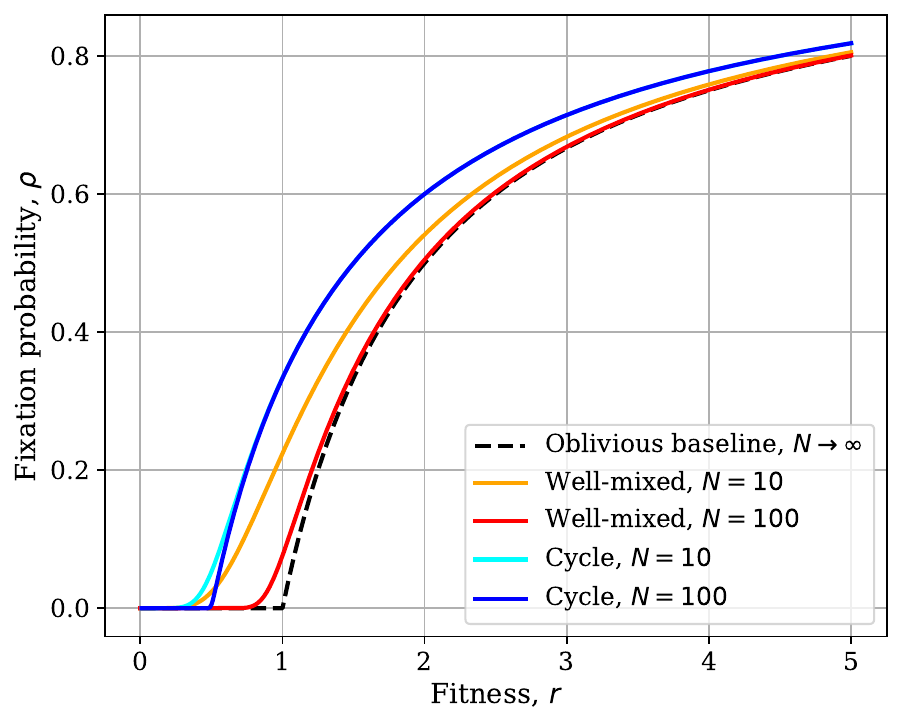}
    \caption{\textbf{Replacers on the cycle $C_N$.}
    For a single replacer, the fixation probability $\fps_r(N)$ on a well-mixed population (red, orange), and also the fixation probability $\fps(C_N)$ on a cycle $C_N$ (cyan, blue) are both increased, as compared to the baseline $\fp(N)$ given by a single oblivious mutant (black, dashed) in either of the two settings.
    For well-mixed populations, the fixation probability $\fp(N)$ tends to the natural baseline $\fp(N)$ as $N\to\infty$, for every mutant reproductive rate $r$ (see \cref{thm:fp-r1,thm:fp-r}).
    For Cycles $C_N$, the fixation probability $\fps(C_N)$ tends to be baseline only when $r<1/2$.
    For $r>1/2$, being a replacer effectively increases the mutant reproductive rate by $\frac12$, as seen by a horizontal offset of $0.5$ between the solid blue curve and the black dashed curve.
    Here $N\in\{10,100\}$ and $0\le r\le 5$.
    }
    \label{fig:boost}
\end{figure}

The above results imply that replacers dominate the mutation-selection equilibrium if they have reproductive rate $r>1/2$, and they are dominated if they have reproductive rate $r<1/2$, see \cref{fig:balance-cn}.

\begin{figure}[h!]
    \centering
    \includegraphics[width=0.8\linewidth]{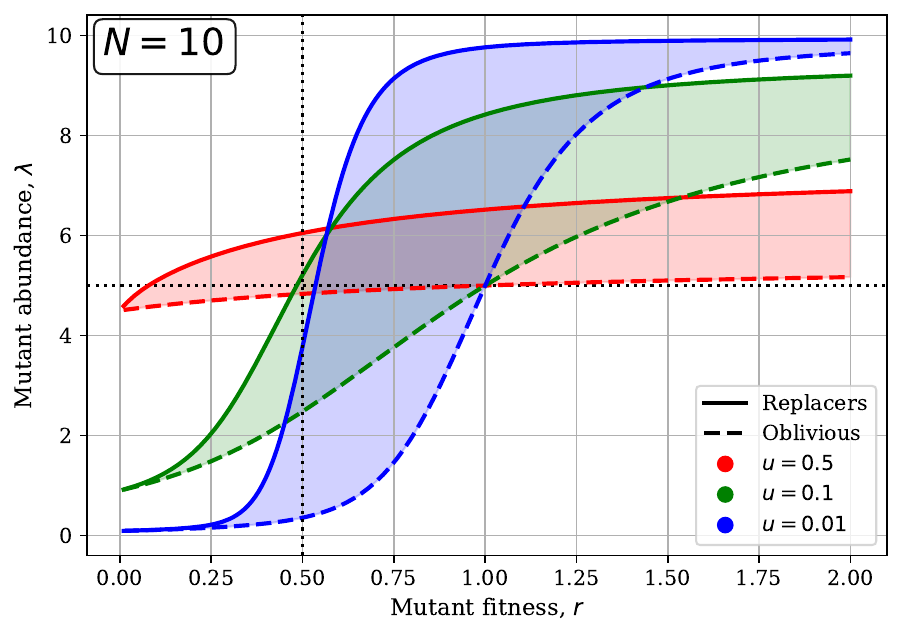}
    \caption{\textbf{Mutation-selection process on the cycle structure $C_N$.}
    The long-term average mutant abundance $\lambda$ in the cycle population $C_N$ depends on the mutant reproductive rate $r$ ($x$-axis) and also on the mutation rate $u$ (colors).
    When the mutants are oblivious (dashed lines), they form a majority in the population whenever $r>1$, otherwise they are in a minority. 
    In contrast, replacers (solid lines) form majority even for certain values $r<1$. The differences in abundance are indicated by the shaded areas.
    In particular, we show that for small mutation rates $u\to0$ and large population sizes $N$, replacers dominate so long as their relative reproductive rate is above a threshold $r^\star=1/2$ (dotted line).
    }
    \label{fig:balance-cn}
\end{figure}

\section*{Discussion}
In summary, we have studied stochastic evolutionary dynamics of replacer phenotypes
in well-mixed populations, which can be represented by complete graphs, and on once-dimensional lattices that can be represented by cycles.
The offspring of a standard, oblivious individual replaces a random neighbor in the population.
If the replaced individual is of the same type, then the reproductive event was wasted in so far as it did not change the composition of the population.
In contrast, the offspring of a replacer phenotype replaces a random neighbor of the other type (if available).
Therefore, replacers are more efficient.
They only waste reproductive events if the other type is not available in the neighborhood of the reproducing individual.
On a complete graph this never happens unless the population is homogeneous.

We find that replacers are efficient competitors.
The fixation probability of a single neutral replacer in a population of size $N$ scales as roughly $1/\sqrt N$.
In comparison, the fixation probability of a standard, oblivious neutral mutant is $1/N$.
Moreover, the probability that the replacers get eliminated once they become established in the population is exponentially small.
Therefore, replacers enjoy strong evolutionary stability.
Because of this stability, replacers tend to dominate oblivious types in the mutation-selection equilibrium, even if their relative reproductive rate is much lower.
When the evolutionary dynamics takes place on a cyclic structure, then replacers remain prevalent in the mutation-selection equilibrium for relative reproductive rate $r$ as low as $r\approx 1/2$.

In this paper, we have concentrated on two biologically relevant population structures: the well-mixed population and the one dimensional lattice (the cycle). 
Studying replacers on other spatial structures is an interesting direction of future research.
We note that in a well-mixed population and on a cycle it suffices to track the number of mutants.
On general spatial structures it is crucial to track the exact configuration of the population (the coloring of the graph).
Therefore, it becomes a formidable problem.

Curiously, it turns out that for both the well-mixed population and for the cycle, the evolutionary dynamics between neutral replacers and oblivious individuals closely resembles the dynamics of an evolutionary game with a payoff matrix.
For the well-mixed population, consider the game with matrix %
$\bordermatrix{ & R & O \cr
      R & 1 & 1 \cr
      O & 0 & 1 }.
$
That is, replacer receives payoff 1 from replacer and from oblivious individuals, while the oblivious individual receives payoff 1 only from the oblivious individuals.
The payoff of each individual determines its reproductive rate or fitness. %
Individuals are selected for reproduction with a probability proportional to their fitness.
It can be shown that the fixation probabilities of the respective types in this evolutionary game are equal to the fixation probabilities of replacers and oblivious individuals as described above (see~\SI{} for details).
Similarly, the dynamics of replacers and oblivious individuals on a cycle can be interpreted as a game with matrix
$\bordermatrix{ & R & O \cr
      R & 3 & 1 \cr
      O & 1 & 1 }.
$
But for general graphs, the evolutionary dynamics of replacers and oblivious types cannot be described by a game with a simple $2\times 2$ payoff matrix.

Finally, we mention one more direction for future work.
So far we have assumed that the offspring of a replacer migrates to a random site among the neighbors that are occupied by the other type.
But with even higher levels of sensing, the offspring could migrate to specific sites, perhaps the ones that are ``most beneficial'' for maximizing either the fixation probability or evolutionary stability of these ``smart replacers''.

\section*{Data and code availability}
The data and code are available at \url{https://github.com/mpecho19/The-evolutionary-advantage-of-replacers-in-the-Moran-process}.

\section*{Acknowledgment}
J.T. and M.P. were supported by Charles Univ.\ projects UNCE 24/SCI/008 and PRIMUS 24/SCI/012.
J.T. was supported by GA\v{C}R grant 25-17377S.


\newpage
\setcounter{theorem}{0}
\section*{Appendix}\label{sec:organization}

This is the Appendix for the paper \textit{\maintitle}. We provide the rigorous mathematical proofs of the theorems from the main text, namely:

\begin{theorem}[Neutral fixation probability]\label{thm:1} In a well-mixed population of size $N$, the fixation probability $\fps$ of a single replacer satisfies $\fps(N)\approx \sqrt{\frac{2}{\pi\cdot N}}$.
\end{theorem}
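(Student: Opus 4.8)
The plan is to obtain $\fps(N)$ in closed form from the gambler's-ruin formula for birth--death chains and then read off its large-$N$ asymptotics by recognizing the resulting series as a Poisson tail. First I would note that the number $i$ of replacers performs a birth--death walk on $\{0,\dots,N\}$ with absorbing endpoints and forward bias $\biass_i=\ps_i/\qs_i=N/(N-i)$, so that the standard fixation formula gives
\[ \fps(N)=\Bigl(\sum_{k=0}^{N-1}\prod_{j=1}^{k}\frac{1}{\biass_j}\Bigr)^{-1}=\Bigl(\sum_{k=0}^{N-1}\prod_{j=1}^{k}\frac{N-j}{N}\Bigr)^{-1}. \]

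Next I would simplify the inner product, $\prod_{j=1}^{k}\frac{N-j}{N}=\frac{(N-1)!}{(N-1-k)!\,N^{k}}$, and substitute $m=N-1-k$ to turn the outer sum into $\frac{(N-1)!}{N^{N-1}}\sum_{m=0}^{N-1}\frac{N^{m}}{m!}$. Using $N^{N-1}/(N-1)!=N^{N}/N!$, this yields the exact expression
\[ \fps(N)=\frac{N^{N}}{N!}\cdot\Bigl(\sum_{m=0}^{N-1}\frac{N^{m}}{m!}\Bigr)^{-1}. \]
The key observation is that for $X\sim\Pois(N)$ one has $\frac{N^{N}}{N!}e^{-N}=\Pr[X=N]$ and $e^{-N}\sum_{m=0}^{N-1}\frac{N^{m}}{m!}=\Pr[X\le N-1]$, so multiplying numerator and denominator by $e^{-N}$ gives the clean identity $\fps(N)=\Pr[X=N]/\Pr[X\le N-1]$.

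It then remains to feed in two standard asymptotics for $\Pois(N)$. Stirling's formula gives $\Pr[X=N]=\frac{N^{N}}{N!}e^{-N}\approx\frac{1}{\sqrt{2\pi N}}$ (the local central limit theorem at the mean), while the central limit theorem gives $\Pr[X\le N-1]=\Pr\bigl[(X-N)/\sqrt{N}\le -1/\sqrt{N}\bigr]\to\Phi(0)=\tfrac12$, since the cutoff $-1/\sqrt N\to 0$ and $\Phi$ is continuous. Dividing these two asymptotics yields $\fps(N)\approx (1/\sqrt{2\pi N})/(1/2)=\sqrt{2/(\pi N)}$, which is exactly the claim.

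The hard part will be justifying the Poisson-median fact $\Pr[X\le N-1]\to\tfrac12$ cleanly. The delicate point is that the CDF is evaluated at $N-1$, only $O(1)$ below the mean and hence deep inside one standard deviation, so one must confirm the limiting value is exactly $\Phi(0)=\tfrac12$ rather than merely some constant; this is the classical Ramanujan limit for the truncated exponential series, where in fact $\Pr[X\le N-1]=\tfrac12-\tfrac13\Pr[X=N]+o(\Pr[X=N])$. For the leading-order statement only bare convergence to $\tfrac12$ is needed, and the $\tfrac13$-correction is of smaller order and so does not affect the final constant; I would therefore import this Poisson asymptotic from the standard references on Gaussian expansions for the Poisson distribution and pair it with Stirling, keeping the routine estimates out of the main line of argument.
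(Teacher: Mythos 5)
Your proposal is correct and follows essentially the same route as the paper's own proof: write the exact fixation probability via the one-dimensional random-walk (gambler's-ruin) formula, recognize the resulting sum as a Poisson cumulative probability, and combine Stirling's approximation with the fact that the Poisson CDF at (or just below) its mean tends to $\tfrac12$. The only, immaterial, difference is a convention: you let the death pool include the reproducer, giving backward bias $(N-j)/N$ and the ratio $\Pr[X=N]/\Pr[X\le N-1]$ for $X\sim\Pois(N)$, whereas the paper's appendix excludes the reproducer, giving $(N-j)/(N-1)$ and $\Pois(N-1)$ evaluated at $N-1$; both conventions yield $\fps(N)\approx\sqrt{2/(\pi N)}$.
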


\begin{theorem}[Neutral elimination probability]\label{thm:2} In a well-mixed population of size $N$, the elimination probability $\eps(N)$ of $N-1$ replacers satisfies $\eps(N)\approx 2/e^{N-1}$.
\end{theorem}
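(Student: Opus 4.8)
The plan is to treat the number of replacers as a birth--death Markov chain on the states $\{0,1,\dots,N\}$ with absorbing boundaries at $0$ (all residents) and $N$ (all replacers), and to read off both the fixation and the elimination probability from the \emph{same} classical absorption formula. Writing $\ps_i$ and $\qs_i$ for the up- and down-transition probabilities at state $i$, I introduce the partial products $g_0=1$ and $g_k=\prod_{j=1}^{k}\qs_j/\ps_j=\prod_{j=1}^{k}1/\biass_j$ for $1\le k\le N-1$. Since absorption probabilities of a birth--death chain depend only on the ratios $\qs_j/\ps_j$ and not on the self-loop probabilities $1-\ps_j-\qs_j$, the standard gambler's-ruin formula gives the probability of reaching $N$ before $0$ from state $i$ as $\bigl(\sum_{k=0}^{i-1}g_k\bigr)/\bigl(\sum_{k=0}^{N-1}g_k\bigr)$. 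Specializing to $i=1$ recovers $\fps(N)=1/\sum_{k=0}^{N-1}g_k$, while elimination is the event of reaching $0$ before $N$ from $i=N-1$, so that $\eps(N)=1-\bigl(\sum_{k=0}^{N-2}g_k\bigr)/\bigl(\sum_{k=0}^{N-1}g_k\bigr)=g_{N-1}/\sum_{k=0}^{N-1}g_k$.

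The key observation is that the fixation and the elimination probabilities share the \emph{same} normalizing sum $\sum_{k=0}^{N-1}g_k$, whence $\eps(N)=g_{N-1}\cdot\fps(N)$. This reduces the whole problem to evaluating the single product $g_{N-1}$, because the asymptotics of the normalizing sum are already contained in \cref{thm:1}: from $\fps(N)\approx\sqrt{2/(\pi N)}$ we get $\sum_{k=0}^{N-1}g_k\approx\sqrt{\pi N/2}$ (this is the Ramanujan $Q$-function asymptotic in disguise). Plugging in the forward bias $\biass_j=(N-1)/(N-j)$ yields the telescoping product $g_{N-1}=\prod_{j=1}^{N-1}\frac{N-j}{N-1}=\frac{(N-1)!}{(N-1)^{N-1}}$.

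It then remains to estimate $g_{N-1}$ by Stirling's formula: with $M=N-1$ we have $M!\approx\sqrt{2\pi M}\,(M/e)^{M}$, hence $g_{N-1}=M!/M^{M}\approx\sqrt{2\pi M}\,e^{-M}$. Multiplying by $\fps(N)\approx\sqrt{2/(\pi N)}$ gives $\eps(N)\approx\sqrt{2\pi(N-1)}\,e^{-(N-1)}\cdot\sqrt{2/(\pi N)}$, and since the prefactor equals $2\sqrt{(N-1)/N}\to 2$, I obtain $\eps(N)\approx 2\,e^{-(N-1)}=2/e^{N-1}$, as claimed. (I note in passing that this step also pins down the correct form of the forward bias: the competing normalization $\biass_j=N/(N-j)$ would give $g_{N-1}=(N-1)!/N^{N-1}\approx\sqrt{2\pi N}\,e^{-N}$ and hence the spurious $\eps(N)\approx 2e^{-N}$, off by a factor of $e$, so it is the version $\biass_j=(N-1)/(N-j)$ that is consistent with the statement.)

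Given that the heavy lifting for the normalizing sum is inherited from \cref{thm:1}, the only genuinely new step is the Stirling estimate of $g_{N-1}$, which is routine, and I do not anticipate a substantial obstacle. The one point that requires mild care is the bookkeeping of the multiplicative $(1-1/N)$-type factors (the interplay between $N$ and $N-1$ inside the square roots and the exponentials), to confirm that the product of the three asymptotic relations --- Stirling for $g_{N-1}$, \cref{thm:1} for $\fps(N)$, and their combination --- converges to the constant $2$ rather than to some other multiple of $e^{-(N-1)}$. Since every relation used is a genuine $\approx$ (a ratio tending to $1$), these combine without difficulty.
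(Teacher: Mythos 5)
Your proof is correct, but it takes a genuinely different route from the paper's. The paper never relates elimination to fixation: it sets up a second birth--death chain tracking the number of \emph{residents}, derives a standalone closed form $\eps_r(K_N)=1/\bigl(e^{r(N-1)}\,\mathbb{P}(\Pois(r(N-1))\le N-1)\bigr)$ (\cref{lem:NS_pois}), and then applies the Poisson-median estimate $\mathbb{P}(\Pois(\lambda)\le\lambda)\to\tfrac12$ (\cref{lem:Pois_mean_asymp}) a second time to extract the constant $2$. You instead observe that fixation and elimination are absorption events of the \emph{same} chain and therefore share the same normalizing sum in the gambler's-ruin formula, giving the identity $\eps(N)=g_{N-1}\cdot\fps(N)$ with $g_{N-1}=(N-1)!/(N-1)^{N-1}$; then \cref{thm:1} supplies the asymptotics of the normalizing sum, Stirling supplies $g_{N-1}\approx\sqrt{2\pi(N-1)}\,e^{-(N-1)}$, and the constant $2$ emerges from the cancellation $\sqrt{2\pi(N-1)}\cdot\sqrt{2/(\pi N)}\to 2$ rather than from a fresh Poisson estimate. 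Your identity is in fact implicit in the paper's two closed forms (dividing the exact expressions for $\eps_{r=1}$ and $\fps_{r=1}$ gives exactly $g_{N-1}$), but the paper never exploits it. The trade-off: your route is shorter and reuses \cref{thm:1} instead of redoing the Poisson computation, while the paper's route produces a general-$r$ formula that it later needs anyway for \cref{thm:elim_complete_r_small}. One point to tighten: you fix the bias $\biass_j=(N-1)/(N-j)$ by ``consistency with the statement,'' which is logically backwards in a self-contained proof. It should instead be derived from the model as stated in the appendix --- the dying individual is chosen uniformly among the \emph{other} $N-1$ individuals, which yields $\qs_j/\ps_j=(N-j)/(N-1)$ directly; the main-text formula $\biass_i=N/(N-i)$ reflects a different death convention and is the outlier, exactly as your factor-of-$e$ sanity check detects.
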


\begin{theorem} [Fixation probability for large $N$]\label{thm:3-fp-r}
In a large well-mixed population of size $N$, the fixation probability $\fps(N)$ of a single replacer with relative fitness $r$ satisfies
\begin{align*}
\fps_r(N) &\approx 1-\frac1r \text{\ \  if $r>1$ and }\\
\fps_r(N) &\sim 
\frac{(c_r)^N}{\sqrt{N}} \text{\ \  if $r<1$},
\end{align*}
where $c_r=r\cdot e^{1-r}>r$ is a constant that depends on $r$.
\end{theorem}

\begin{theorem}[Elimination probability for large $N$]\label{thm:4-ep-r}
In a large well-mixed population of size $N\ge 3$, the elimination probability $\fps$ of a $N-1$ replacers with relative fitness $r$ satisfies
\begin{align*}
\eps_r(N) &\lesssim 1/r^N\text{\ \  if $r>1$ and }\\
\eps_r(N) &\sim (c'_r)^N  \text{\ \  if $r<1$},
\end{align*}
where $c'_r=1/e^r<1$ is a constant that depends on $r$.
\end{theorem}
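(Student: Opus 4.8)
The plan is to collapse the elimination probability into an explicit finite sum and then read off the asymptotics from a Poisson tail estimate. First I would record the one-step transition probabilities in the state where $i$ mutants (replacers of fitness $r$) face $N-i$ residents. A reproducing replacer is selected with probability $\frac{ir}{ir+N-i}$ and, being a replacer, always displaces a resident, so $\ps_i=\frac{ir}{ir+N-i}$; a reproducing resident is oblivious and hits a mutant with probability $i/N$, so $\qs_i=\frac{N-i}{ir+N-i}\cdot\frac iN$. Hence the forward bias is $\biass_i=\ps_i/\qs_i=\frac{rN}{N-i}$, generalizing the neutral value $\frac{N}{N-i}$ from the main text and reducing to it at $r=1$.

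Next I would apply the standard gambler's-ruin formula for a birth--death chain on $\{0,\dots,N\}$ with both endpoints absorbing. Writing $\rho_j=\prod_{m=1}^{j}1/\biass_m=\prod_{m=1}^{j}\frac{N-m}{rN}=\frac{(N-1)!}{(N-1-j)!\,(rN)^j}$, the probability of reaching state $0$ from the initial state $i=N-1$ equals $\eps_r(N)=\rho_{N-1}/\sum_{j=0}^{N-1}\rho_j$. Reindexing the denominator by $k=N-1-j$ factors out $\rho_{N-1}$ and leaves the clean closed form
\[
\eps_r(N)=\frac{1}{\sum_{k=0}^{N-1}(rN)^k/k!}.
\]
The whole theorem now rests on the partial exponential sum $S_N:=\sum_{k=0}^{N-1}(rN)^k/k!$, which I would rewrite probabilistically as $S_N=e^{rN}\cdot\Pr[\Pois(rN)\le N-1]$, so that $\eps_r(N)=e^{-rN}/\Pr[\Pois(rN)\le N-1]$.

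For $r<1$ the Poisson mean $rN$ sits strictly below the cutoff, so the cutoff lies far above the mean. A Chernoff bound gives $\Pr[\Pois(rN)\ge N]\le (re^{1-r})^N=c_r^N$, and since $re^{1-r}<1$ for $r\ne 1$ this upper tail is exponentially small; hence $\Pr[\Pois(rN)\le N-1]\in[1-c_r^N,1]$ and $\eps_r(N)\approx e^{-rN}=(c'_r)^N$, which is even stronger than the claimed $\sim$. (Pleasingly, the Chernoff base is exactly the constant $c_r$ of \cref{thm:3-fp-r}.) For $r>1$ the cutoff lies below the mean, the summands of $S_N$ are increasing throughout $0\le k\le N-1$ (consecutive ratio $rN/(k+1)\ge r>1$), so the last term dominates; comparing successive terms by the factor $\le 1/r$ sandwiches $S_N$ between $(rN)^{N-1}/(N-1)!$ and $\frac{r}{r-1}$ times the same quantity. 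Inverting and applying Stirling to $(N-1)!/(rN)^{N-1}\approx \frac{r\sqrt{2\pi N}}{(er)^N}$, and noting $(er)^N=e^Nr^N$ with $\sqrt N/e^N\to 0$, yields $\eps_r(N)\lesssim 1/r^N$ as claimed; the mild hypothesis $N\ge 3$ keeps the interior range and the geometric comparison non-degenerate.

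The routine part is the algebra collapsing the gambler's-ruin ratio to the Poisson partial sum. The main obstacle is the asymptotic control of $S_N$: for $r<1$ one needs a genuinely quantitative upper-tail bound for $\Pois(rN)$ to drive the denominator to $1$ (Chernoff suffices here, though the sharper Gaussian approximations to the Poisson distribution could be invoked if a finer estimate were wanted), and for $r>1$ one must combine the geometric sandwich with a careful Stirling expansion to verify the $1/r^N$ envelope despite the extra $e^{-N}$ and $\sqrt N$ factors.
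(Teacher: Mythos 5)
Your proposal is correct, and its skeleton---collapse the elimination chain via gambler's ruin into a Poisson partial sum, $\eps_r(N)=e^{-\lambda}/\mathbb{P}[\Pois(\lambda)\le N-1]$, and then control the Poisson CDF---is exactly the paper's route via \cref{lem:NS_pois}. Two remarks on where you diverge. First, a bookkeeping point: in the paper's model the oblivious resident kills one of the \emph{other} $N-1$ individuals, so its Poisson parameter is $\lambda=r(N-1)$ rather than your $rN$ (your transition probability $\qs_i=\frac{N-i}{ir+N-i}\cdot\frac{i}{N}$ implicitly allows self-replacement). This shifts the exact answer by the constant factor $e^{r}$, so your sharper claim $\eps_r(N)\approx (c'_r)^N$ holds only in your convention---under the paper's model one gets $\eps_r(N)\approx e^{r}(c'_r)^{N}$---but since the theorem asserts only $\sim$, both conventions prove the statement. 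Second, the substantive differences. For $r<1$ you use an elementary Poisson Chernoff bound $\mathbb{P}[\Pois(rN)\ge N]\le (re^{1-r})^N$, where the paper invokes a Kullback--Leibler tail inequality (\cref{lem:KL_divergence}, quoted from Short); both show the CDF tends to $1$, and yours is more self-contained. For $r>1$ your argument is genuinely different: you estimate the partial sum directly (increasing terms, geometric domination by the last term, Stirling), which yields the much sharper conclusion $\eps_r(N)\sim \sqrt{N}/(er)^N$, exponentially \emph{smaller} than the claimed envelope $1/r^N$. The paper instead uses the one-line comparison $\eps_r(K_N)\le \ep_r(K_N)=\fp_{1/r}(K_N)=(r-1)/(r^N-1)$ (\cref{thm:elim_complete_r_big}), which is slicker and is the actual reason for the hypothesis $N\ge 3$, but only delivers the stated $\lesssim 1/r^N$ rather than the true exponential rate $(er)^{-N}$ that your computation exposes. (The monotonicity $\eps_r\le\ep_r$ that the paper asserts without proof can be justified by comparing the backward biases term by term in \cref{lem:one-dimensional}, so both routes are rigorous.)
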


\begin{theorem}
    [Cycle]\label{thm:5-cycle}
    Let $C_N$ be the cycle of size $N$, where $N$ is large. Then
\begin{align*}
\fps_r(C_N) &= \frac1{2N-1}
&\textrm{and }&&
\eps_r(C_N) &= \frac1{N-\frac12}
\text{\ \  if $r=1/2$, }\\
\fps_r(C_N) &\approx 1-\frac1{r+\frac12}
&\textrm{and }&& 
\eps_r(C_N) &\sim 1/(2r)^N  
\text{\ \  if $r>1/2$, }\\
\fps_r(C_N) &\sim (2r)^N  
&\textrm{and }&&
\eps_r(C_N) &\approx 1-2r
\text{\ \  if $r<1/2$}.
\end{align*}
\end{theorem}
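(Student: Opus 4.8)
The plan is to reduce the dynamics on $C_N$ to a one-dimensional birth-death chain and then apply the standard absorption (gambler's ruin) formula. The crucial structural observation is that, whether we run the fixation process from a single mutant or the elimination process from a single resident, the set of mutants always forms a contiguous arc on the cycle. Indeed, a reproduction event can only change the type of a site adjacent to the reproducing individual, so the mutant set can only grow or shrink at one of its (at most two) interfaces with the resident set, and can never be split. Consequently the configuration is fully described by the number $i$ of mutants, and the evolutionary dynamics becomes a birth-death chain on $\{0,1,\dots,N\}$ with absorbing states $0$ and $N$, exactly as in the well-mixed case. As noted earlier in the paper, both the fixation and the elimination probability then depend only on the forward biases $\bias_i=\p_i/\q_i$.

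Next I would compute these forward biases for a replacer of fitness $r$. Writing $F_i=ir+(N-i)$ for the total fitness when there are $i$ mutants, I would argue that for $2\le i\le N-1$ there are two mutant-resident interfaces. To increase $i$, one of the two boundary mutants (each having a single resident neighbor) must reproduce; as a replacer it always places its offspring on that resident, giving $\p_i=2r/F_i$. To decrease $i$, a boundary resident must reproduce onto its mutant neighbor; being oblivious it does so with probability $1/2$, and a short check of the endpoint $i=N-1$ (one resident with two mutant neighbors) confirms the $1/2$-factors are exactly compensated, yielding $\q_i=1/F_i$. Hence $\bias_i=2r$ for all $2\le i\le N-1$. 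The state $i=1$ is special: the lone mutant has two resident neighbors but is still a single reproducer, so $\p_1=r/F_1$ while $\q_1=1/F_1$, giving the boundary value $\bias_1=r$. This single correction at $i=1$ is what produces the exact constants rather than merely the leading exponential behavior.

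With $\rho_j=1/\bias_j$, the gambler's ruin formula (noting that elimination starts from the state $i=N-1$) gives
\[
\fps_r(C_N)=\frac{1}{\sum_{k=0}^{N-1}\prod_{j=1}^{k}\rho_j},
\qquad
\eps_r(C_N)=\frac{\prod_{j=1}^{N-1}\rho_j}{\sum_{k=0}^{N-1}\prod_{j=1}^{k}\rho_j},
\]
and substituting $\rho_1=1/r$, $\rho_j=1/(2r)$ for $j\ge 2$ turns the common denominator into $1+\frac1r\sum_{m=0}^{N-2}(2r)^{-m}$, a geometric series with ratio $x=1/(2r)$. Everything then follows by evaluating this series.

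Finally I would split into three cases according to how $x$ compares to $1$. When $r=1/2$ the ratio is $x=1$, the series equals $N-1$, and the two formulas collapse to the exact values $1/(2N-1)$ and $1/(N-\tfrac12)$. When $r>1/2$ we have $x<1$, the series converges to a constant, and one reads off $\fps_r(C_N)\to(2r-1)/(2r+1)=1-1/(r+\tfrac12)$, while the numerator $\prod_{j=1}^{N-1}\rho_j=\tfrac1r(2r)^{-(N-2)}$ drives $\eps_r(C_N)\sim 1/(2r)^N$. When $r<1/2$ we have $x>1$, the series is dominated by its top term $x^{N-2}$, and the same bookkeeping gives $\fps_r(C_N)\sim (2r)^N$ and $\eps_r(C_N)\to 1-2r$. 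None of these steps is deep; the only real care is needed in the second paragraph, in correctly enumerating the boundary reproduction events at the two interfaces and at the special endpoints $i=1$ and $i=N-1$, since an error by a factor of two there would shift the threshold $r^\star=1/2$ and corrupt all the limiting constants. That careful case analysis of the transition probabilities is the main, and essentially the only, obstacle.
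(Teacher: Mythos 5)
Your proposal is correct and follows essentially the same route as the paper: reduce the dynamics to a birth--death chain on the mutant count via the contiguous-block observation, compute the biases ($\gamma_1=1/r$ and $\gamma_k=1/(2r)$ for $k\ge 2$, including the same endpoint checks at $i=1$ and $i=N-1$), apply the gambler's-ruin formula, and evaluate the resulting geometric series in the three cases $r=1/2$, $r>1/2$, $r<1/2$. The only cosmetic difference is that the paper obtains the elimination probability from a separate chain tracking the resident count (with biases $2r$, and $r$ at the last step), whereas you read it off as the complementary absorption probability of the same mutant-count chain; these are the same Markov chain viewed from opposite ends and yield identical formulas.
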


The rest of this text is organized as follows.
In Section~\ref{sec:prelims} we define some important notions, and recall some standard results that we use later in our proofs.
In Section~\ref{sec:neutral} we prove \cref{thm:1} and \cref{thm:2}. 

In Section~\ref{sec:non-neutral} we prove \cref{thm:complete_fix_disadv,thm:complete_fix_adv}, that together immediately imply \cref{thm:3-fp-r}.   And we also prove \cref{thm:elim_complete_r_small,thm:elim_complete_r_big} that together immediately imply \cref{thm:4-ep-r}.

Finally, in Section~\ref{sec:cycle-structure} we prove \cref{thm:cycle_fixation,thm:cycle_elimination}.
Those two results, together with other arguments, imply \cref{thm:5-cycle}.

\section{Preliminaries}\label{sec:prelims}
We start with a few definitions. 

We consider a population spread over $N$ sites, where each site is occupied by a single individual.

A spatial structure of the population can be captured by a graph (network) $G$ that prescribes which pairs of sites neighbor each other.
We consider two possible spatial structures.
A well-mixed population of size $N$ is represented by a complete graph $K_N$, where every two sites neighbor each other.
The cycle spatial structure is represented by a cycle graph $C_N$, where each site has two neighbors (one clockwise, one counter-clockwise).

Initially, some individuals are mutants with relative reproductive rate (fitness) $r>0$, while the other individuals are residents with reproductive rate (fitness) 1.

Moran process changes the initial composition of the population as follows.
In each discrete time-step, one individual from the population is selected as a parent (with probability proportional to its fitness), the parent produces an offspring, and the offspring migrates to one of the neighboring sites.

\begin{defn}[Oblivious mutants and replacers]
We call an individual \emph{oblivious} if its offspring migrates to a neighboring site selected uniformly at random from among all the neighboring sites.
We call an individual a \emph{replacer} if the offspring migrates to a random neighboring site from among those sites that are currently occupied by the other type than the reproducing parent (if any such site is available).
\end{defn}

In this paper, we consider mutants who are either oblivious or replacers. The residents are always oblivious. 

In the absence of mutation, Moran process terminates with the mutants either fixating (that is, spreading over all the sites of the graph $G$), or being wiped out.

\begin{defn}[Fixation and elimination probability]
We define the (mutant) \emph{fixation probability} as the probability that starting with a single mutant and $N-1$ residents, the mutants fixate.
We define the (mutant) \emph{elimination probability} as the probability that starting with $N-1$ mutants and a single resident, the mutants get wiped out.
 
\end{defn}

To concisely talk about the fixation and elimination probability of oblivious mutants or replacers, in either the well-mixed population $K_N$ or the cycle spatial structure $C_N$, we employ the following notation.
Note that we consider only two types of graphs $G_N$, namely the complete graphs $K_N$ and the cycle graphs $C_N$. 

\begin{defn}
Let $G_N$ be a graph on $N$ vertices and let $r>0$ be the mutant fitness. Then:
\begin{itemize}

\item $\fp_r(G_N)$ denotes the fixation probability of a single \emph{oblivious} mutant in graph $G_N$ against $N-1$ \emph{oblivious} residents.

\item $\fps_r(G_N)$ denotes the fixation probability of a single \emph{replacer} mutant in graph $G_N$ against $N-1$ \emph{oblivious} residents.

\item $\ep_r(G_N)$ denotes the elimination probability of mutants when a single \emph{oblivious} resident starts in graph $G_N$ against $N-1$  \emph{oblivious} mutants.

\item $\eps_r(G_N)$ denotes the elimination probability of mutants when a single \emph{oblivious} resident starts in graph $G_N$ against $N-1$ \emph{replacer} mutants.

\end{itemize}
\end{defn}

In both spatial structures $K_N$ and $C_N$, the current composition of the population can be fully described by the number $i$ of mutants.
Thus, we can represent the Moran process as a Markov chain on a set of states labeled from 0 up to $N$, where in each step the number $i$ of mutants may change to $i-1$ or $i$ or $i+1$.
The main text uses a notion of a forward bias. Here we define its reciprocal, the backward bias.

\begin{defn}[Backward bias]\label{def:gamma}
Let $p^+_i$ be the probability of moving from state $i$ to state $i+1$ and 
$p^-_i$ the probability of moving from state $i$ to state $i-1$. We define
\[
\gamma_i = \frac{p^-_i}{p^+_i}.
\]
\end{defn}

\begin{defn}
    Let $T_k$ be the total fitness of states with $k$ mutants. Formally, $T_k = k\cdot r + (N-k)$.
\end{defn}

We will also make use of the following well known result.

\begin{lemma}[One-dimensional random walk]\label{lem:one-dimensional} 
Consider a one-dimensional random walk with states $\{0,1,\dots,N\}$ and absorbing boundaries at $0$ and $N$. 
Given $\gamma_1,\dots,\gamma_{N-1}$ as in Definition~\ref{def:gamma}, the probability of absorption at $N$ 
starting from state $i$ is
\[
\fp = \frac{1+\sum_{j=1}^{i-1}\prod_{k=1}^{j}\gamma_k}{1 + \sum_{j=1}^{N-1}\prod_{k=1}^{j}\gamma_k}.
\]
\end{lemma}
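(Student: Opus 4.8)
The plan is to run the standard first-step analysis for absorption probabilities and then solve the resulting second-order linear recurrence by the method of successive differences. Write $h_i$ for the probability of absorption at $N$ starting from state $i$, with boundary values $h_0=0$ and $h_N=1$. Since the chain is finite with two absorbing endpoints and every interior state has $p^+_i>0$ and $p^-_i>0$, absorption occurs with probability $1$, so each $h_i$ is well-defined and satisfies the one-step recurrence obtained by conditioning on the outcome of the first move:
\[
h_i = p^+_i\,h_{i+1} + p^-_i\,h_{i-1} + (1 - p^+_i - p^-_i)\,h_i, \qquad 1\le i\le N-1.
\]

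Next I would eliminate the self-loop term. Subtracting $(1-p^+_i-p^-_i)h_i$ from both sides and rearranging yields $p^+_i(h_{i+1}-h_i)=p^-_i(h_i-h_{i-1})$, so the stay-in-place probability drops out entirely. Introducing the consecutive differences $d_i:=h_i-h_{i-1}$ for $1\le i\le N$, this reads $d_{i+1}=\gamma_i\,d_i$, where $\gamma_i=p^-_i/p^+_i$ is exactly the backward bias of Definition~\ref{def:gamma}. Iterating this one-term recurrence gives the closed form $d_{j+1}=d_1\prod_{k=1}^{j}\gamma_k$ for $1\le j\le N-1$.

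Then I would recover $h_i$ by telescoping. Because $h_0=0$, summing the differences and reindexing $m=j+1$ gives
\[
h_i = \sum_{m=1}^{i} d_m = d_1\Bigl(1 + \sum_{j=1}^{i-1}\prod_{k=1}^{j}\gamma_k\Bigr).
\]
The unknown constant $d_1=h_1$ is pinned down by the boundary condition $h_N=1$: setting $i=N$ above and solving gives $d_1=\bigl(1+\sum_{j=1}^{N-1}\prod_{k=1}^{j}\gamma_k\bigr)^{-1}$. Substituting this value back produces exactly the claimed ratio, which is $\fp$ evaluated at the starting state $i$.

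This argument is essentially the classical gambler's-ruin computation with state-dependent step probabilities, so I do not anticipate a genuine obstacle. The only points requiring care are (i) justifying that absorption is certain, so that the harmonic recurrence and the boundary values $h_0=0$, $h_N=1$ are legitimate — this uses positivity of $p^{\pm}_i$ at interior states together with finiteness of the chain — and (ii) the bookkeeping of the index shift $m=j+1$ in the telescoping step, which is where an off-by-one error would most easily creep in.
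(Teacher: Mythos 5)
Your proof is correct, and there is nothing in the paper to compare it against: the paper states this lemma as a well-known result and gives no proof, so your first-step analysis (eliminating the lazy self-loop term, passing to the differences $d_i=h_i-h_{i-1}$ with $d_{i+1}=\gamma_i d_i$, telescoping from $h_0=0$, and normalizing via $h_N=1$) is exactly the canonical gambler's-ruin derivation the authors are implicitly invoking. Your index bookkeeping matches the stated formula, and your side remarks are apt: $p^+_i>0$ is already implicit in Definition~\ref{def:gamma} (otherwise $\gamma_i$ is undefined), and it both licenses the division step and guarantees that $N$ is reachable from every interior state, so absorption is certain in the finite chain.
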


The following notation is helpful when stating results for large population sizes $N$.
\begin{defn}
    For functions $f,g$ of $N$ we write
\[
f(N)\approx g(N) \quad\text{if}\quad \lim_{N\to\infty}\frac{f(N)}{g(N)}=1,
\]

and 

\[f(N) \sim g(N)\]
if there exist constants positive constants $c_1, c_2$ such that 

$$ c_1 \leq \lim_{N \to \infty} \frac{f(N)}{g(N)} \leq c_2 .$$
\end{defn}

It is known that the fixation probability of a single oblivious mutant with fitness $r\ne 1$ on either the complete graph $K_N$ or on the cycle $C_N$ is equal to $(1-1/r)/(1-1/r^N)$. Thus, for $r<1$ we get the following asymptotics when $N$ is large.
\begin{lemma}
    Let $0< r < 1$, then we have 
    $$\fp_r(C_N) = \fp_r(K_n) \approx r^{N-1}(1-r) \sim r^N.$$
\end{lemma}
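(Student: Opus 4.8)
The statement is essentially an asymptotic corollary of the exact closed form recalled immediately above, namely $\fp_r(G_N) = (1-1/r)/(1-1/r^N)$, which holds identically for both $G_N = K_N$ and $G_N = C_N$. First I would clear the nested reciprocals to rewrite it as
\[
\fp_r(G_N) = \frac{1 - 1/r}{1 - 1/r^N} = \frac{r^{N-1}(1-r)}{1 - r^N},
\]
which in particular records the middle equality $\fp_r(C_N) = \fp_r(K_N)$ at the level of exact formulas.

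The asymptotics are then immediate. For fixed $r \in (0,1)$ we have $r^N \to 0$ as $N \to \infty$, so the denominator $1 - r^N \to 1$ and hence $\fp_r(K_N)/\bigl(r^{N-1}(1-r)\bigr) \to 1$, which is precisely the claim $\fp_r(K_N) \approx r^{N-1}(1-r)$. For the proportionality $\sim r^N$, note that $r^{N-1}(1-r)/r^N = (1-r)/r$ is a positive constant depending only on $r$; combined with $1 - r^N \in [1-r,\,1)$ for $N \ge 1$, the ratio $\fp_r(K_N)/r^N$ stays bounded between two positive constants, giving $\fp_r(K_N) \sim r^N$.

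For completeness I would also indicate why the exact formula holds and is the same on both structures, since this is the only delicate point; everything after it is a one-line geometric-series limit. Both cases reduce, via Lemma~\ref{lem:one-dimensional}, to computing the backward bias $\gamma_i$ of \cref{def:gamma}. On $K_N$ a direct count of the birth–death weights gives $\gamma_i = 1/r$ for every $i$, the neighbour factors cancelling between the increasing and the decreasing move. On $C_N$ the same constant $\gamma_i = 1/r$ arises once one observes that, starting from a single mutant, the mutant-occupied vertices always form a contiguous arc, so the mutant–resident boundary has exactly two edges in every non-absorbing state; the two-edge contributions then cancel identically. With $\gamma_i \equiv 1/r$ and starting state $i = 1$, Lemma~\ref{lem:one-dimensional} returns $1/\sum_{j=0}^{N-1}(1/r)^j$, and summing the geometric series reproduces the closed form above.
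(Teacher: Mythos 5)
Your proposal is correct and takes essentially the same approach as the paper: both recall the known exact formula $\fp_r(C_N)=\fp_r(K_N)=\frac{1-1/r}{1-1/r^N}$ and obtain $\approx r^{N-1}(1-r)\sim r^N$ by clearing denominators and using that $r^N\to 0$ for fixed $r\in(0,1)$. Your closing sketch of why that exact formula holds (constant backward bias $\gamma_i\equiv 1/r$ on both $K_N$ and $C_N$, fed into \cref{lem:one-dimensional}) is extra detail that the paper omits---it simply cites the formula as known---but it is accurate and harmless.
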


\begin{myproof}
    It is known that $\fp_r(C_N) = \fp_r(K_n) = \frac{1 -\frac1r}{1- \frac{1}{r^N}}$. Therefore 
    $$\fp_r(K_n) = \frac{1 -\frac1r}{1- \frac{1}{r^N}} = r^N \frac{1 - \frac1r}{r^N - 1 } \approx r^{N-1}\left(1 -r  \right) \sim r^N. $$
\end{myproof}

Finally, some of our proofs refer to the standard normal distribution.

\begin{defn}
    We denote by $\phi(x)$ and $\Phi(x)$ the density function and the cumulative distribution function of the standard normal distribution, respectively.
\end{defn}

\section{Neutral evolution}\label{sec:neutral}

In this section we compute the exact formulas for fixation probability and elimination probability on complete graphs in the neutral case ($r=1$), and derive their asymptotic behaviors. 

We first compute the general formula for fixation probability of a single replacer on a complete graph $K_N$. We will use this lemma in the proof of \cref{thm:1} and later in \cref{sec:non-neutral} in the proof of \cref{thm:complete_fix_disadv}.
The formula involves a term $\mathbb{P}(\Pois(r(N-1)) \leq N-1)$ which is the probability that a random variable distributed according to a Poisson distribution with parameter $\lambda=r(N-1)$ attains a value which is at most $N-1$.

\begin{lemma}\label{lem:sn_complete_closed}
    Let $r>0$. For a complete graph $K_N$ we have 
    $$\fps_r(K_N) = \frac{r^{N-1}(N-1)^{N-1}}{(N-1)! \cdot e^{r(N-1)}\cdot  \mathbb{P}(\Pois(r(N-1)) \leq N-1)}. $$
\end{lemma}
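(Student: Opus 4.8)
The plan is to model the replacer dynamics on $K_N$ as a one-dimensional birth-death chain on the number $i$ of mutants, compute the backward biases $\gamma_i$ in closed form, feed them into \cref{lem:one-dimensional} started from a single mutant, and recognize the resulting denominator as a rescaled truncated Poisson series.

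First I would write down the one-step transition probabilities at a state with $i$ mutants, where the total fitness is $T_i = ir + (N-i)$. A mutant is chosen as parent with probability $ir/T_i$; since a replacer always places its offspring onto a resident and at least one resident is available whenever $1\le i\le N-1$, every mutant reproduction increases the count, so $p^+_i = ir/T_i$. A resident is chosen with probability $(N-i)/T_i$ and, being oblivious, sends its offspring to a uniformly random one of its $N-1$ neighbors, hitting a mutant with probability $i/(N-1)$; hence $p^-_i = \frac{N-i}{T_i}\cdot\frac{i}{N-1}$. Dividing, the factors $T_i$ and $i$ cancel and leave the clean expression
\[
\gamma_i = \frac{p^-_i}{p^+_i} = \frac{N-i}{(N-1)\,r}.
\]

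Next I would substitute into \cref{lem:one-dimensional} with starting state $i=1$, where the numerator is the empty product $1$, giving $\fps_r(K_N) = \bigl(1+\sum_{j=1}^{N-1}\prod_{k=1}^j\gamma_k\bigr)^{-1}$. The product of biases evaluates to
\[
\prod_{k=1}^j\gamma_k = \frac{(N-1)(N-2)\cdots(N-j)}{\bigl((N-1)r\bigr)^j} = \frac{(N-1)!}{(N-1-j)!\,\bigl((N-1)r\bigr)^j},
\]
so, writing $m=N-1$ and $\mu = r(N-1)$, the denominator becomes $\sum_{j=0}^{m}\frac{m!}{(m-j)!\,\mu^j}$, where the $j=0$ term supplies the leading $1$. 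The key step is the index reversal $\ell=m-j$, which factors this as $\frac{m!}{\mu^m}\sum_{\ell=0}^{m}\frac{\mu^\ell}{\ell!}$ and exposes a truncated exponential series.

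Finally I would apply the identity $\sum_{\ell=0}^{m}\frac{\mu^\ell}{\ell!} = e^{\mu}\,\mathbb{P}(\Pois(\mu)\le m)$ to rewrite the denominator as $\frac{m!\,e^{\mu}}{\mu^m}\,\mathbb{P}(\Pois(\mu)\le m)$; taking reciprocals and restoring $m=N-1$, $\mu = r(N-1)$ yields exactly the claimed formula. I expect the only genuine subtlety to be bookkeeping rather than depth: getting the neighbor (not self-loop) death rule right so that $N-1$, and not $N$, appears in $\gamma_i$ — and hence the Poisson parameter is $r(N-1)$ — and carrying out the reversal $\ell=m-j$ cleanly so the partial Poisson sum emerges. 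Everything else is routine algebra.
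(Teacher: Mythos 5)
Your proposal is correct and follows essentially the same route as the paper's own proof: identical transition probabilities $p^+_i = ir/T_i$ and $p^-_i = \frac{N-i}{T_i}\cdot\frac{i}{N-1}$, the same backward bias $\gamma_i = \frac{N-i}{(N-1)r}$ fed into \cref{lem:one-dimensional}, the same index reversal $\ell = N-1-j$, and the same recognition of the truncated exponential series as $e^{\mu}\,\mathbb{P}(\Pois(\mu)\le N-1)$. No gaps; the bookkeeping points you flag (the $N-1$ in the death rule and the clean re-indexing) are exactly the ones the paper handles.
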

\begin{myproof}
      The probability of gaining a mutant is the probability of choosing one of the $k$ mutants for reproduction and as mutants are replacers, they will always choose a resident. Therefore $p^+_k = \frac{k}{T_k}\cdot r$.

    The probability of losing a mutant is the probability of choosing one of the $N-k$ residents for reproduction and then choosing one of the $k$ mutants. The probability of choosing a resident for reproduction is $\frac{N-k}{T_k}$ and as residents are oblivious, they will choose a mutants with probability $\frac{k}{N-1}$. Hence, $p^-_k = \frac{N-k}{T_k}\cdot\frac{k}{N-1}$.

    Then $\gamma_k = \frac{p^-_k}{p^+_k} = \frac{N-k}{(N-1)\cdot r}$ and using \cref{lem:one-dimensional} we have 

    $$\fps_r(K_N)  = \frac{1}{1 + \sum_{j=1}^{N-1}\prod_{k=1}^{j} \frac{N-k}{r(N-1)}}  = 
    \frac{1}{1
     + \sum_{j=1}^{N-1} \frac{(N-1)!}{(N-1-j)! (N-1)^j r^j}}  = 
     \frac{1}{\sum_{j=0}^{N-1} \frac{(N-1)!}{(N-1-j)!(N-1)^j r^j}}  =$$
     $$ = \frac{1}{\frac{(N-1)!}{(r(N-1))^{N-1}}\sum_{j=0}^{N-1} \frac{(r(N-1))^{N-j-1}}{(N-1-j)!}},$$
    and by re-indexing $l = N- 1 - j$ we get that the expression above is equal to 

    $$\fps_r(K_N) = \frac{1}{\frac{(N-1)!}{(r(N-1))^{N-1}} \sum_{l=0}^{N-1} \frac{(r(N-1))^{l}}{l!}} =\frac{1}{\frac{(N-1)!}{(r(N-1))^{N-1}} e^{r(N-1)} \sum_{l=0}^{N-1} \frac{(r(N-1))^{l}}{l!}e^{-r(N-1)}} =$$
    $$ = \frac{r^{N-1}(N-1)^{N-1}}{(N-1)!\cdot e^{r(N-1)} \mathbb{P}(\Pois(r(N-1)) \leq N-1)}.$$
\end{myproof}

To understand the asymptotics of the exact formula in~\cref{lem:sn_complete_closed}, we use a standard lemma about the Poisson distribution. For convenience we also sketch a proof.

\begin{lemma}\label{lem:Pois_mean_asymp}
    For $\lambda \in \mathbb{N}$ we have $$\mathbb{P}(\Pois(\lambda)  \leq \lambda) = \frac12 + \mathcal{O}(\lambda^{-1/2}). $$
\end{lemma}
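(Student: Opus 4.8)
Show that for integer $\lambda$, the Poisson probability $\mathbb{P}(\Pois(\lambda) \le \lambda) = \frac{1}{2} + \mathcal{O}(\lambda^{-1/2})$.

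The plan is to exploit the central limit theorem intuition: a $\Pois(\lambda)$ random variable has mean $\lambda$ and variance $\lambda$, so after standardization it converges to a standard normal, and the event $\{\Pois(\lambda) \le \lambda\}$ corresponds to landing at or below the mean. Since a $\Pois(\lambda)$ variable with integer $\lambda$ can be written as a sum $X_1 + \cdots + X_\lambda$ of $\lambda$ i.i.d.\ $\Pois(1)$ variables, the Berry–Esseen theorem applies directly. The standardized variable is $(\Pois(\lambda) - \lambda)/\sqrt{\lambda}$, and the event $\Pois(\lambda) \le \lambda$ is exactly the event that this standardized variable is $\le 0$. By Berry–Esseen, the distribution function at $0$ differs from $\Phi(0) = \frac{1}{2}$ by at most a constant times $\lambda^{-1/2}$ (the third absolute moment of $\Pois(1)$ is a fixed constant, and the sum of $\lambda$ such terms gives the $\lambda^{-1/2}$ rate).

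First I would invoke the decomposition of $\Pois(\lambda)$ into a sum of $\lambda$ independent $\Pois(1)$ summands, valid precisely because $\lambda \in \mathbb{N}$. Next I would cite the Berry–Esseen bound, which states that $\sup_x |F_\lambda(x) - \Phi(x)| \le C \rho / (\sigma^3 \sqrt{\lambda})$ where $F_\lambda$ is the CDF of the standardized sum, $\sigma^2 = 1$ is the per-summand variance, and $\rho$ is the per-summand third absolute central moment of $\Pois(1)$, a universal constant. Evaluating at $x = 0$ and noting $\Phi(0) = \frac{1}{2}$ immediately gives
\[
\left| \mathbb{P}(\Pois(\lambda) \le \lambda) - \tfrac{1}{2} \right| \le \frac{C\rho}{\sqrt{\lambda}} = \mathcal{O}(\lambda^{-1/2}).
\]
One minor subtlety is the treatment of the boundary point: the Poisson distribution is discrete, so $\mathbb{P}(\Pois(\lambda) \le \lambda)$ includes the atom at $\lambda$, and the standardized CDF is right-continuous at $0$; Berry–Esseen controls the distance uniformly over all $x$, so it handles the value at $0$ without needing a continuity correction, though a continuity correction would only improve the constant.

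The main obstacle, such as it is, is simply deciding how much rigor to import versus cite: a fully self-contained argument would require either reproducing a Berry–Esseen-type estimate or carrying out a direct saddle-point / Stirling analysis of the partial sum $\sum_{j=0}^{\lambda} \lambda^j e^{-\lambda}/j!$. Since the statement only claims the crude $\mathcal{O}(\lambda^{-1/2})$ rate (not a sharp constant), the cleanest route is the Berry–Esseen citation; the authors explicitly say they will ``sketch a proof,'' so I would keep the argument short and lean on the normal-approximation references already in the bibliography (the Gaussian-expansion and Poisson-bound papers cited), rather than grinding through the elementary but tedious Stirling estimates.
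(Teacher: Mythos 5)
Your proof is correct, and it reaches the result by a genuinely different key citation than the paper, though the skeleton is the same. Both arguments hinge on the decomposition of $\Pois(\lambda)$ into a sum of $\lambda$ i.i.d.\ $\Pois(1)$ variables---exactly where the hypothesis $\lambda\in\mathbb{N}$ enters---followed by a normal-approximation theorem evaluated at the mean. You invoke the Berry--Esseen theorem, which gives the uniform non-asymptotic bound $\sup_x\left|F_\lambda(x)-\Phi(x)\right|\le C\rho/\sqrt{\lambda}$ (with $\rho$ the finite third absolute central moment of $\Pois(1)$), so evaluating at $x=0$ yields $\left|\mathbb{P}(\Pois(\lambda)\le\lambda)-\tfrac12\right|\le C\rho/\sqrt{\lambda}$ at once. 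The paper instead cites an Edgeworth-type expansion from Barndorff-Nielsen and Cox (via Janssen et al.): $\mathbb{P}(A_\lambda\le \lambda+\beta\sqrt{\lambda})=\Phi(\beta)-\frac{\phi(\beta)(\beta^2-1)}{6\sqrt{\lambda}}+\mathcal{O}(1/\lambda)$, and sets $\beta=0$ to obtain $\tfrac12+\frac{1}{6\sqrt{2\pi\lambda}}+\mathcal{O}(1/\lambda)$. The expansion buys more precision than the lemma needs---the exact coefficient of the $\lambda^{-1/2}$ term---which is then discarded into $\mathcal{O}(\lambda^{-1/2})$; your route delivers exactly the stated crude rate, with an explicit constant valid for every $\lambda$ rather than only asymptotically. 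Your treatment of the boundary point is also sound: Berry--Esseen is uniform in $x$, so the atom of the discrete distribution at $\lambda$ requires no continuity correction.
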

\begin{myproof}
    From~\cite[Eq.~(4.18)]{barndorff1990asymptotic}, as cited in~\cite{janssen2008gaussian}, we have that

    \[\mathbb{P}(A_\lambda \leq s) = \Phi(\beta) - \frac{\phi(\beta)(\beta^2   -1)}{6\sqrt{\lambda}} + \mathcal{O}\left(\frac{1}{\lambda}\right),\]
    where $A_\lambda$ is the sum of $\lambda$ Poisson random variables with mean $1$, $\beta$ is an arbitrary constant, and $s = \lambda + \beta \sqrt{\lambda}$. Let $\beta = 0$. Then $s = \lambda$ and $\Pois(\lambda)$ can be written as a sum of $\lambda$ Poisson random variables with mean $1$, therefore we have 
    $$\mathbb{P}(\Pois (\lambda) \leq \lambda) = \Phi(0) - \frac{\phi(0)(-1)}{6\sqrt{\lambda}} + \mathcal{O}\left(\frac{1}{\lambda}\right) = \frac{1}{2} + \frac{1}{6\sqrt{2\pi \lambda}} + \mathcal{O}\left(\frac{1}{\lambda}\right) = \frac{1}{2} +\mathcal{O}(\lambda^{-1/2}) . $$
\end{myproof}

We are now ready to prove~\cref{thm:1} from the main text.

\begin{proof}[Proof of~\cref{thm:1}]
    According to \cref{lem:sn_complete_closed}, the fixation probability for $r=1$ is given by
    $$\fps_{r=1}(K_N)  = \frac{(N-1)^{N-1}}{(N-1)!\cdot e^{N-1}\mathbb{P}(\Pois(N-1) \leq N-1)}.$$ 
    By applying \cref{lem:Pois_mean_asymp} and then Stirling's approximation $k!\approx \sqrt{2\pi k}\cdot(k/e)^k$, we obtain
    \begin{align*}
        \fps_{r=1}(K_N)  &= \frac{(N-1)^{N-1}}{(N-1)!\cdot e^{N-1}\mathbb{P}(\Pois(N-1) \leq N-1)}\\
        &= \frac{(N-1)^{N-1}}{(N-1)!\cdot e^{N-1}(\frac{1}{2} + \mathcal{O}(N^{-1/2}))}\\
        &\approx\frac{(N-1)^{N-1}e^{N-1}}{\sqrt{2\pi (N-1)} (N-1)^{N-1} e^{N-1}(\frac{1}{2} + \mathcal{O}(N^{-1/2}))}\\
        &\approx\frac{1}{\sqrt{2\pi (N-1)}  \cdot \frac{1}{2} } \approx \frac{1}{\sqrt{N-1}} \cdot \sqrt\frac{2}{\pi}\approx \sqrt{\frac{2}{\pi\cdot N}}.
    \end{align*}
\end{proof}

Next, we proceed with the elimination probability. As before, first we compute a general formula for the elimination probability of a single replacer on a complete graph $K_N$. We will use the lemma in the proof of \cref{thm:2} and also later in \cref{sec:non-neutral} in the proof of \cref{thm:elim_complete_r_small}.

\begin{lemma}\label{lem:NS_pois}
Let $r>0$, then we have

$$ \eps_r(K_N) = \frac{1}{e^{r(N-1)}\mathbb{P}(\Pois(r(N-1)) \leq N-1)}.$$
\end{lemma}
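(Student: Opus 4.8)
The plan is to reuse the Markov chain already analyzed in \cref{lem:sn_complete_closed}, since the underlying dynamics are identical. With $k$ replacer mutants and a single oblivious resident, a mutant reproduces and (being a replacer) replaces a resident for sure, giving $p^+_k = kr/T_k$, while a resident reproduces and replaces one of the $k$ mutants with probability $k/(N-1)$, giving $p^-_k = \frac{N-k}{T_k}\cdot\frac{k}{N-1}$. Hence the backward bias is again $\gamma_k = \frac{N-k}{(N-1)r}$, exactly as in the fixation computation. The only genuine difference is the boundary: for elimination the walk starts from $i=N-1$ mutants and we want absorption at state $0$, which is the complement of the event computed in \cref{lem:one-dimensional}.

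Subtracting the absorption-at-$N$ probability (started from $i=N-1$) from $1$ collapses the telescoping sum, since the two numerators differ only in the single top term:
$$\eps_r(K_N) = 1 - \frac{1 + \sum_{j=1}^{N-2}\prod_{k=1}^{j}\gamma_k}{1 + \sum_{j=1}^{N-1}\prod_{k=1}^{j}\gamma_k} = \frac{\prod_{k=1}^{N-1}\gamma_k}{1 + \sum_{j=1}^{N-1}\prod_{k=1}^{j}\gamma_k}.$$
I would recognize the denominator as precisely $1/\fps_r(K_N)$, so that the elimination probability factors neatly as $\eps_r(K_N) = \fps_r(K_N)\cdot \prod_{k=1}^{N-1}\gamma_k$.

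It then remains to evaluate the full product. Because $\prod_{k=1}^{N-1}(N-k) = (N-1)!$, I get $\prod_{k=1}^{N-1}\gamma_k = \frac{(N-1)!}{(r(N-1))^{N-1}}$. Substituting the closed form for $\fps_r(K_N)$ from \cref{lem:sn_complete_closed} and cancelling the common factors $(N-1)!$ and $(r(N-1))^{N-1}$ yields the stated identity
$$\eps_r(K_N) = \frac{1}{e^{r(N-1)}\,\mathbb{P}(\Pois(r(N-1)) \leq N-1)}.$$
I expect no real obstacle here: once the chain is identified as the one from \cref{lem:sn_complete_closed}, the computation is mechanical. The only point requiring care is the boundary bookkeeping — taking the complement of \cref{lem:one-dimensional} and starting from $i=N-1$ rather than $i=1$ — after which the sums telescope and the algebra cancels cleanly.
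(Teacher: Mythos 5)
Your proposal is correct, but it follows a different route than the paper. The paper's proof never touches the mutant-counting chain from \cref{lem:sn_complete_closed}: instead it reinterprets the elimination of $N-1$ mutants as the \emph{fixation of a single resident}, tracks the number $k$ of residents, computes the resident-walk biases $\gamma_k = \frac{r(N-1)}{k}$, and applies \cref{lem:one-dimensional} from state $i=1$; the sum $\sum_{j=0}^{N-1}\frac{(r(N-1))^j}{j!}$ then appears immediately and is recognized as $e^{r(N-1)}\,\mathbb{P}(\Pois(r(N-1))\le N-1)$. You instead stay with the mutant-counting chain, apply \cref{lem:one-dimensional} from state $i=N-1$, and pass to the complement, which gives the clean identity $\eps_r(K_N) = \fps_r(K_N)\cdot\prod_{k=1}^{N-1}\gamma_k$ with $\prod_{k=1}^{N-1}\gamma_k = \frac{(N-1)!}{(r(N-1))^{N-1}}$; substituting the closed form of \cref{lem:sn_complete_closed} then cancels perfectly. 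Both arguments are sound. Your version buys economy (it reuses the fixation formula rather than redoing a walk computation) and it makes explicit a general birth-death fact, namely that the two absorption probabilities from opposite ends of the chain differ exactly by the product of all the biases; the paper's version buys self-containedness and lands on the Poisson expression directly, without depending on the algebra of the earlier lemma. One small point you leave implicit: taking the complement of the absorption-at-$N$ probability requires that the walk is absorbed almost surely (all interior states are transient since $p_k^\pm>0$), which holds here and is standard, but deserves a sentence. Also, your phrase ``with $k$ replacer mutants and a single oblivious resident'' is a slip --- the state has $N-k$ residents --- though your formula $p^-_k = \frac{N-k}{T_k}\cdot\frac{k}{N-1}$ is the correct one.
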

\begin{myproof}
    Note that the extinction probability of $N-1$ mutants is the probability that a single resident spreads across the whole graph. We denote the number of residents by $k$. Let $p_k^+$ and $p_k^-$ be the probabilities of gaining and losing a resident, respectively. 
    The probability of gaining a resident is the probability of choosing one of the residents for reproduction and then selecting a mutant, thus $p_k^+ = \frac{k}{T_{N-k}}\cdot \frac{N-k}{N-1}$. Similarly, the probability of losing a resident is the probability of choosing one of the mutants, thus $p_k^- = r\cdot \frac{N-k}{T_{N-k}}$. Hence, we can calculate the bias $\gamma_k = \frac{p_k^-}{p_k^+} = \frac{r(N-k)}{T_{N-k}}\cdot\frac{(N-1) \cdot T_{N-k}}{(N-i)\cdot k} = \frac{r(N-1)}{ k}$. Therefore, using \cref{lem:one-dimensional} we have 
    $$ \eps_r(K_N)  = \frac{1}{1 + \sum_{j=1}^{N-1}\prod_{k=1}^j \frac{r(N - 1)}{ k} }  = \frac{1}{1 + \sum_{j=1}^{N-1}\frac{(r(N-1))^j}{ j!}} = \frac{1}{e^{r(N-1)}\sum_{j=0}^{N-1} \frac{(r(N-1))^j}{j!}\cdot e^{-r(N-1)}} = $$ $$ =\frac{1}{e^{r(N-1)}\mathbb{P}(\Pois(r(N-1)) \leq N-1)}.$$
\end{myproof}

Now we are in position to prove \cref{thm:2} from the main text.

\begin{proof}[Proof of~\cref{thm:2}]
By \cref{lem:NS_pois} for $r=1$ we have $$ \eps_{r=1}(K_N)  =\frac{1}{e^{N-1}\mathbb{P}(\Pois(N-1) \leq N-1)}.$$ Hence using \cref{lem:Pois_mean_asymp} we get
$$\eps_{r=1}(K_N)  =\frac{1}{e^{N-1}\mathbb{P}(\Pois(N-1) \leq N-1)} = \frac{1}{e^{N-1} (\frac{1}{2} + \mathcal{O}((N-1)^{-1/2}))} \approx \frac{2}{ e^{N-1}}.$$ 
\end{proof}

\section{Non-neutral evolution}\label{sec:non-neutral}

In this section we derive asymptotics for fixation probability and elimination probability on complete graphs in the non-neutral case.
We consider the disadvantageous replacers and the advantageous replacers separately.

\subsection{Disadvantageous replacers}

We make use of the following lemma to bound the distribution function of Poisson distribution in proofs of \cref{thm:complete_fix_disadv} and \cref{thm:elim_complete_r_small}
\begin{lemma}[{\cite[Theorem~2]{short2013poisson}}]\label{lem:KL_divergence}
    Let $X \approx \Pois(\lambda)$. Then 

    $$\mathbb{P}(X \leq k ) > \Phi\left(\sign(k - \lambda) \sqrt{2D_{KL}(Q_- \mid \mid P)}\right),$$
    where $D_{KL}(Q_- \mid \mid P) $ is the Kullback–Leibler divergence of $Q_- = \Pois(k)$ from $P = \Pois(\lambda)$. As $Q_-$ and $P$ are Poisson distributions, we have 

    $$D_{KL}(Q_- \mid \mid P) = \lambda - k + k\log\frac{k}{\lambda}.$$
    
\end{lemma}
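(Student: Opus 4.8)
The statement is quoted verbatim from \cite{short2013poisson}, so in principle one may simply invoke it; for completeness, here is how I would prove it directly, via a saddlepoint-type (signed-square-root-of-deviance) argument. The plan is to pass from the Poisson CDF to its Gamma-integral form, recognise $D_{KL}$ as the value at $\lambda$ of the natural rate function, and then transport the integral onto the Gaussian scale. Concretely, I would start from $\mathbb{P}(X\le k)=\frac{1}{k!}\int_\lambda^\infty t^k e^{-t}\,dt$ and factor $t^k e^{-t}=k^k e^{-k}e^{-g(t)}$, where $g(t)=t-k-k\log(t/k)$ is convex, nonnegative, and minimised at $t=k$ with $g(k)=0$. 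The key algebraic observation is $g(\lambda)=\lambda-k+k\log(k/\lambda)=D_{KL}(Q_-\mid\mid P)$, which is exactly the rate appearing in the claim.

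Next I would substitute $u=\sign(t-k)\sqrt{2g(t)}$, a smooth increasing bijection of $(0,\infty)$ onto $\mathbb{R}$ with $u=0$ corresponding to $t=k$ and with lower limit $u(\lambda)=\sign(\lambda-k)\sqrt{2D_{KL}}=-z$. Differentiating $u^2/2=g(t)$ gives the Jacobian $J(u):=\frac{dt}{du}=\frac{u\,t}{t-k}>0$, so $\mathbb{P}(X\le k)=\frac{k^k e^{-k}}{k!}\int_{-z}^\infty e^{-u^2/2}J(u)\,du$. Writing Stirling as $k!=\sqrt{2\pi k}\,(k/e)^k e^{\theta_k}$ with $0<\theta_k<1/(12k)$, the target inequality $\mathbb{P}(X\le k)>\Phi(z)=\frac{1}{\sqrt{2\pi}}\int_{-z}^\infty e^{-u^2/2}\,du$ becomes $\int_{-z}^\infty e^{-u^2/2}h(u)\,du>0$, where $h(u)=\frac{e^{-\theta_k}}{\sqrt k}J(u)-1$. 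Two clean identities then fall out: letting $\lambda\to0^+$ (so $-z\to-\infty$) recovers $\int_0^\infty t^k e^{-t}\,dt=k!$, which forces $\int_{-\infty}^\infty e^{-u^2/2}h(u)\,du=0$; and since $J(0)=\sqrt k$ (from the Taylor expansion $g(t)=\frac{(t-k)^2}{2k}+\cdots$), one gets $h(0)=e^{-\theta_k}-1<0$.

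It then suffices to prove that $h$ changes sign exactly once, from negative to positive. Granting this, set $I(a)=\int_a^\infty e^{-u^2/2}h(u)\,du$; since $I'(a)=-e^{-a^2/2}h(a)$, the function $I$ increases up to the crossing point and decreases afterwards, while both endpoints vanish ($I(+\infty)=0$ trivially, and $I(-\infty)=0$ by the identity above). Hence $I(a)>0$ for every finite $a$, in particular for $a=-z$, which yields the strict inequality for all $\lambda>0$; the boundary case $\lambda=k$ (where $z=0$) simply reproduces $\mathbb{P}(\Pois(\lambda)\le\lambda)>\tfrac12$, consistent with \cref{lem:Pois_mean_asymp}.

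The crux — and the step I expect to be the main obstacle — is the single upward crossing of $h$, which I would derive from monotonicity of the Jacobian. Viewed as a function of $t$, $J=\frac{t\sqrt{2g(t)}}{|t-k|}$ runs from $0$ as $t\to0^+$, through $\sqrt k$ at $t=k$, to $\infty$ as $t\to\infty$; showing that $J$ is strictly increasing (equivalently $dJ/dt>0$) is delicate precisely because $g$, $g'=(t-k)/t$, and $|t-k|$ all degenerate at $t=k$, so the inequality must be checked via a careful expansion of $g$ near $t=k$ together with a separate monotonicity estimate away from it. Once $J$ is increasing, so is $h$, and the endpoint signs ($h\to-1$ and $h\to+\infty$) force exactly one upward crossing. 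The remaining bookkeeping — handling the two regimes $k\lessgtr\lambda$ uniformly through the signed root, the small values of $k$, and propagating strictness — is routine.
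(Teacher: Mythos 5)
The paper itself never proves this lemma: it is imported wholesale from \cite[Theorem~2]{short2013poisson}, so the citation you open with is exactly what the paper does, and for the paper's purposes invoking the reference suffices. Your supplementary direct proof, however, is essentially sound and self-contained in outline: the gamma-integral identity for the Poisson CDF, the factorization $t^k e^{-t}=k^k e^{-k}e^{-g(t)}$, the identification $g(\lambda)=D_{KL}(Q_-\mid\mid P)$, the signed-root substitution with Jacobian $J(u)=ut/(t-k)$, the normalization identity $\int_{-\infty}^{\infty}e^{-u^2/2}h(u)\,du=0$ obtained by integrating over the whole line, and the increase-then-decrease argument for $I(a)$ with vanishing endpoints are all correct, and together they reduce the theorem to the single upward crossing of $h$, i.e., to strict monotonicity of $J$.

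That crux, which you flag as the main obstacle and leave unproven, is true and closes more easily than you suggest; no delicate expansion at $t=k$ is needed. Substituting $t=k(1+x)$ with $x\in(-1,\infty)$ gives $J(u)^2/k=F(x)$ where $F(x)=2(1+x)^2\bigl(x-\log(1+x)\bigr)/x^2$, and a direct computation yields
\[
F'(x)=\frac{2(1+x)}{x^3}\,\bigl(x^2-2x+2\log(1+x)\bigr).
\]
Setting $H(x)=x^2-2x+2\log(1+x)$, one has $H(0)=0$ and $H'(x)=2x^2/(1+x)\ge 0$, so $H$ has the same sign as $x^3$ on each side of $0$, whence $F'>0$ on $(-1,0)\cup(0,\infty)$. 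Thus $J$ is strictly increasing, $h$ crosses zero exactly once from below (the crossing lies in $u>0$ since $h(0)=e^{-\theta_k}-1<0$), and your argument goes through. Two small leftovers: for $k=0$ the Stirling normalization degenerates and the substitution maps onto $(0,\infty)$ only, but there the claim reads $e^{-\lambda}>\Phi(-\sqrt{2\lambda})$ and follows from the standard tail bound $\Phi(-x)\le \tfrac12 e^{-x^2/2}$; and the Robbins form $0<\theta_k<1/(12k)$ of Stirling is exactly what your sign argument at $u=0$ requires, so that step is fine as stated.
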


First, we derive the asymptotic expression for the fixation probability of a single replacer.

\begin{theorem}\label{thm:complete_fix_disadv}
    Let $0 < r <1$. Then for a complete graph $K_N$ we have
    $$\fps_r(K_N) \approx \frac{r^{N-1} e^{(1-r)(N-1)}}{\sqrt{2\pi(N-1)}}.$$
\end{theorem}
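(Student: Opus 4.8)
The plan is to start from the exact formula for $\fps_r(K_N)$ established in \cref{lem:sn_complete_closed}, namely
$$\fps_r(K_N) = \frac{r^{N-1}(N-1)^{N-1}}{(N-1)!\, e^{r(N-1)}\, \mathbb{P}(\Pois(r(N-1)) \le N-1)},$$
and to determine its asymptotics by analyzing the deterministic prefactor and the Poisson tail factor separately. The deterministic part simplifies cleanly: substituting Stirling's approximation $(N-1)! \approx \sqrt{2\pi(N-1)}\,((N-1)/e)^{N-1}$ turns $(N-1)^{N-1}/(N-1)!$ into $e^{N-1}/\sqrt{2\pi(N-1)}$, and merging the exponentials via $e^{N-1}/e^{r(N-1)} = e^{(1-r)(N-1)}$ yields
$$\fps_r(K_N) \approx \frac{r^{N-1}\, e^{(1-r)(N-1)}}{\sqrt{2\pi(N-1)}\cdot \mathbb{P}(\Pois(r(N-1)) \le N-1)}.$$
Thus the whole theorem reduces to the single claim that the Poisson tail factor satisfies $\mathbb{P}(\Pois(r(N-1)) \le N-1) \to 1$ as $N\to\infty$.

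Intuitively this is clear: the Poisson variable has mean $\lambda = r(N-1)$ and standard deviation $\sqrt{\lambda} = \Theta(\sqrt N)$, whereas the threshold $N-1$ sits a distance $(1-r)(N-1) = \Theta(N)$ above the mean, i.e. many standard deviations above it, so the event $\{\Pois(\lambda) \le N-1\}$ becomes overwhelmingly likely. To make this rigorous I would invoke the lower bound of \cref{lem:KL_divergence} with $X \sim \Pois(\lambda)$, $\lambda = r(N-1)$, and $k = N-1$. Since $r<1$ we have $k > \lambda$, so $\sign(k-\lambda) = +1$, and the Kullback--Leibler divergence evaluates to
$$D_{KL} = \lambda - k + k\log(k/\lambda) = (N-1)\big(r - 1 - \log r\big).$$
Because $\log r < r-1$ for every $r\in(0,1)$, the constant $r-1-\log r$ is strictly positive, so $D_{KL}$ grows linearly in $N$ and $\sqrt{2D_{KL}}\to\infty$. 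Hence $\mathbb{P}(\Pois(r(N-1)) \le N-1) > \Phi(\sqrt{2D_{KL}}) \to 1$; combined with the trivial upper bound that a probability is at most $1$, a squeeze gives the desired limit $1$, and plugging it back in yields the claimed $\fps_r(K_N) \approx r^{N-1} e^{(1-r)(N-1)}/\sqrt{2\pi(N-1)}$.

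I expect the only genuine subtlety to be bookkeeping: one must check that the various $\approx$ relations compose correctly, which holds because both Stirling's relative error and the Poisson factor's convergence are multiplicative, so the product of the corresponding ratios still tends to $1$. The conceptually important point is that here we need the Poisson factor to tend to $1$ \emph{exactly} — a mere constant lower bound, such as the value $\tfrac12$ arising in the neutral case of \cref{lem:Pois_mean_asymp}, would not suffice for the sharp $\approx$ claimed in this theorem. The large-deviation lower bound of \cref{lem:KL_divergence} is precisely the tool that upgrades the tail estimate from a constant to $1$, since in the disadvantageous regime $r<1$ the threshold lies far in the upper tail of the Poisson distribution rather than at its mean.
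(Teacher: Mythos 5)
Your proposal is correct and follows essentially the same route as the paper's proof: both start from the exact formula of \cref{lem:sn_complete_closed}, use \cref{lem:KL_divergence} to show $\mathbb{P}(\Pois(r(N-1)) \le N-1) \to 1$ via the positivity of $r-1-\log r$ on $(0,1)$, and finish with Stirling's approximation. The only cosmetic differences are the order of the two steps and your use of the standard inequality $\log r < r-1$ in place of the paper's monotonicity argument for the same constant.
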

\begin{myproof}
    Using \cref{lem:sn_complete_closed} we have that for $0 < r<1$ the fixation probability is equal to 

    $$\fps_r(K_N) = \frac{r^{N-1}(N-1)^{N-1}}{(N-1)! \cdot e^{r(N-1)} \mathbb{P}(\Pois(r(N-1)) \leq N-1)}.$$
    We use \cref{lem:KL_divergence} to bound $\mathbb{P}(\Pois(r(N-1)) \leq N-1)$ from below by a function that tends to 1 as $N\to\infty$. We have 
    $$\mathbb{P}(\Pois(r(N-1)) \leq N-1) > \Phi(\sign(N-1 - r(N-1)) \sqrt{2D_{KL}(Q_- \mid\mid P)}), $$
    where 
    $$D_{KL}(Q_- \mid\mid P)= (N-1)r - (N-1) + (N-1)\log\left(\frac{N-1}{(N-1)r}\right) = (N-1)(r - 1 - \log r). $$

    Let $c_r = (r -1 + \log \frac{1}{r})$ be a constant only dependent on $r$. We show that $c_r >0$ for $0< r<1$.
    We have $c_1 = 0$ and $\frac{d}{dr} = 1 - \frac{1}{r} < 0$ for $0<r<1$, hence $c_r$ is decreasing for $0< r < 1$ and $c_r$ is continuous at $r=1$, therefore $c_r >0$ for $0< r < 1$. We have $\sign(N-1 - r(N-1)) =1 $, hence 
    $\mathbb{P}(\Pois(r(N-1)) \leq N-1) > \Phi\left(\sqrt{2(N-1)c_r}\right) .$  As $\Phi$ is a continuous distribution function and $\lim_{N\to \infty} (N-1)c_r = \infty$, we have that $$\lim_{N\to \infty} \Phi(\sqrt{2(N-1 )c_r} = 1 \leq \lim_{N \to \infty}\mathbb{P}(\Pois(r(N-1)) \leq N-1) \leq 1,$$
hence $\lim_{N \to \infty}\mathbb{P}(\Pois(r(N-1)) \leq N-1) =1.$ Therefore we have  
$$\fps_r(K_N) = \frac{r^{N-1}(N-1)^{N-1}}{(N-1)! \cdot e^{r(N-1)} \mathbb{P}(\Pois(r(N-1)) \leq N-1)} \approx \frac{r^{N-1}(N-1)^{N-1}}{(N-1)! \cdot e^{r(N-1)} }.$$
Using Stirling's approximation we get
$$\fps_r(K_N) \approx \frac{r^{N-1}(N-1)^{N-1}}{(N-1)! \cdot e^{r(N-1)} }\approx \frac{r^{N-1}(N-1)^{N-1}e^{N-1}}{\sqrt{2\pi (N-1)}(N-1)^{N-1}e^{r(N-1)}} \approx \frac{r^{N-1}e^{(1-r)(N-1)}}{\sqrt{2\pi(N-1)}}.$$
\end{myproof}

Next, we find the asymptotic expression for the elimination probability of a single replacer.
\begin{theorem}\label{thm:elim_complete_r_small}
     Let $0 < r < 1$, then we have 
    $$\eps_r(K_N) \approx \frac{1}{e^{r(N-1)}}.$$
\end{theorem}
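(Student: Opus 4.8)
The plan is to invoke \cref{lem:NS_pois}, which gives the exact closed form
$$\eps_r(K_N) = \frac{1}{e^{r(N-1)}\mathbb{P}(\Pois(r(N-1)) \leq N-1)}.$$
Comparing this to the claimed asymptotic $1/e^{r(N-1)}$, it is clear that the entire task reduces to showing that the Poisson tail factor $\mathbb{P}(\Pois(r(N-1)) \leq N-1)$ tends to $1$ as $N\to\infty$. Once that is established, the ratio $\eps_r(K_N)\big/\frac{1}{e^{r(N-1)}}$ equals $1/\mathbb{P}(\Pois(r(N-1))\leq N-1)$, which then tends to $1$, giving exactly the $\approx$ relation required.

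To prove the Poisson factor tends to $1$, I would reuse verbatim the argument already carried out in the proof of \cref{thm:complete_fix_disadv}. The Poisson random variable has mean $\lambda = r(N-1)$, and we are evaluating its CDF at $k = N-1$, which for $r<1$ lies strictly above the mean. Thus we are in the upper-tail regime where the probability should concentrate near $1$. Concretely, I would apply \cref{lem:KL_divergence} with $\lambda = r(N-1)$ and $k = N-1$, so that $\sign(k-\lambda)=+1$, obtaining the lower bound
$$\mathbb{P}(\Pois(r(N-1)) \leq N-1) > \Phi\left(\sqrt{2(N-1)c_r}\right),$$
where $c_r = r - 1 - \log r = r-1+\log\frac1r$. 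The same short convexity computation as before shows $c_r>0$ for $0<r<1$ (indeed $c_1=0$ and $\frac{d}{dr}c_r = 1-\frac1r<0$ on $(0,1)$, so $c_r$ is positive there). Since $(N-1)c_r\to\infty$, continuity of $\Phi$ gives $\Phi(\sqrt{2(N-1)c_r})\to 1$, and as the probability is squeezed between this lower bound and $1$, it converges to $1$.

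Assembling the pieces, I would write
$$\eps_r(K_N) = \frac{1}{e^{r(N-1)}\mathbb{P}(\Pois(r(N-1)) \leq N-1)} \approx \frac{1}{e^{r(N-1)}},$$
where the final $\approx$ uses $\lim_{N\to\infty}\mathbb{P}(\Pois(r(N-1))\leq N-1)=1$. I expect essentially no obstacle here: this theorem is strictly easier than \cref{thm:complete_fix_disadv} because the elimination formula in \cref{lem:NS_pois} lacks the $\frac{(N-1)^{N-1}}{(N-1)!}$ factor, so no Stirling approximation is needed and the only asymptotic input is the Poisson concentration already handled. The one point worth care is simply noting the reindexing of $k$ as the number of \emph{residents} when reading off $c'_r = 1/e^r$ from the statement of \cref{thm:4-ep-r}: here $e^{r(N-1)} = (e^r)^{N-1}$, so $\eps_r(K_N)\sim (1/e^r)^N = (c'_r)^N$, consistent with the main-text claim.
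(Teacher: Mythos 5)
Your proposal is correct and follows essentially the same route as the paper's own proof: it invokes \cref{lem:NS_pois} for the exact formula and then uses \cref{lem:KL_divergence} with $\lambda=r(N-1)$, $k=N-1$ to show the Poisson factor $\mathbb{P}(\Pois(r(N-1))\leq N-1)$ is squeezed to $1$ via the bound $\Phi(\sqrt{2(N-1)c_r})$ with $c_r=r-1+\log\frac1r>0$. The only difference is cosmetic: your write-up is cleaner than the paper's (which contains a stray, mistyped duplicate definition of $c_r$), and your closing remark reconciling $1/e^{r(N-1)}$ with the $(c'_r)^N$ form of \cref{thm:4-ep-r} is a correct and useful observation the paper leaves implicit.
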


\begin{myproof}
By \cref{lem:NS_pois}, for $r<1$ we have $$\eps_r(K_N) = \frac{1}{e^{r(N-1)}\mathbb{P}(\Pois(r(N-1)) \leq N-1)}.$$ We use \cref{lem:KL_divergence}  
to bound the probability $\mathbb{P}(\Pois(r(N-1) \leq N-1)$ from below by a function that tends to 1 as $N\to\infty$. We have $$\mathbb{P}(\Pois(r(N-1)) \leq N-1) > \Phi(\text{sign}(N-1 - r(N-1)) \sqrt{2D_{KL}(Q_- \mid \mid P)} ) ,$$
where $$D_{KL}(Q_- \mid \mid P) = r(N-1) - N-1 + (N-1)\log\left(\frac{N-1}{r(N-1)}\right) = (N-1)\left(r  - 1 + \log\frac{1}{r}\right).$$
Let $c_r = (\frac{1}{r}  - 1 + \log r)$ be a constant only dependent on $r$. We show that $c_r > 0$ for $r>1$.

Let $c_r = (r -1 + \log \frac{1}{r})$ be a constant only dependent on $r$. We show that $c_r >0$ for $0< r<1$. we have $c_1 = 0$ and $\frac{d}{dr} = 1 - \frac{1}{r} < 0$ for $0<r<1$, hence $c_r$ is decreasing for $0< r < 1$ and $c_r$ is continuous in $r=1$, therefore $c_r >0$ for $0< r < 1$. We have $\sign(N-1 - r(N-1)) =1 $, hence 
    $\mathbb{P}(\Pois(r(N-1)) \leq N-1) > \Phi\left(\sqrt{2(N-1)c_r}\right) .$  As $\Phi$ is a continuous distribution function and $\lim_{N\to \infty} (N-1)c_r = \infty$, we have that $$\lim_{N\to \infty} \Phi(\sqrt{2(N-1 )c_r}) = 1 \leq \lim_{N \to \infty}\mathbb{P}(\Pois(r(N-1)) \leq N-1) \leq 1,$$
hence $\lim_{N \to \infty}\mathbb{P}(\Pois(r(N-1)) \leq N-1) =1.$ 

We derive the asymptotic behavior of the fixation probability from this as 

$$\lim_{N\to \infty} \eps_r(K_N) =  \lim_{N\to \infty}\frac{e^{r(N-1)}}{e^{r(N-1)} \mathbb{P}(\Pois(r(N-1)) \leq N-1)} =1,$$
therefore $$\eps_r(K_N) = \frac{1}{e^{r(N-1)}\mathbb{P}(\Pois(r(N-1) \leq N-1)} \approx \frac{1}{e^{r(N-1)}}.$$
\end{myproof}

To find fitness threshold for which the fixation probability is greater than elimination probability, we use the following theorem. 

\begin{theorem}
    We have 
    \begin{align*}
    \fps_r(K_N) > \eps_r(K_N) &\iff r> \frac{\sqrt[N-1]{(N-1)!}}{N-1},\\
 \fps_r(K_N) = \eps_r(K_N) &\iff r= \frac{\sqrt[N-1]{(N-1)!}}{N-1},\\
  \fps_r(K_N) < \eps_r(K_N) &\iff r<\frac{\sqrt[N-1]{(N-1)!}}{N-1},
     \end{align*}
    and for large enough $N$ we have $\frac{\sqrt[N-1]{(N-1)!}}{N-1}\to \frac{1}{e} \doteq 0.37$
\end{theorem}
\begin{proof}
    Using \cref{lem:sn_complete_closed} and \cref{lem:NS_pois} we have: 
    \begin{equation*}
     \fps_r(K_N) >\eps_r(K_N)
    \end{equation*}
    \begin{align*}
         &\iff \frac{r^{N-1}(N-1)^{N-1}}{(N-1)! \cdot e^{r(N-1)}\cdot  \mathbb{P}(\Pois(r(N-1)) \leq N-1)} >\frac{1}{e^{r(N-1)}\mathbb{P}(\Pois(r(N-1)) \leq N-1)}\\
        &\iff \frac{r^{N-1}(N-1)^{N-1}}{(N-1)!} > 1 \\
        &\iff r > \frac{\sqrt[N-1]{(N-1)!}}{N-1}.
    \end{align*}
    We can prove the other cases analogously. 

    For the second part we use Stirling's approximation $k!\approx \sqrt{2\pi k}\cdot(k/e)^k$:

    $$\frac{\sqrt[N-1]{(N-1)!}}{N-1} \approx \frac{\sqrt[N-1]{\sqrt{2\pi(N-1)}(N-1)^{N-1}/ e^{N-1}}}{N-1} \approx \frac{\sqrt[2N-2]{2\pi(N-1)}}{e} \approx \frac{1}{e}.$$

\end{proof}

\subsection{Advantageuos replacers}

We first derive the asymptotic behavior of thefixation probability.

\begin{theorem}\label{thm:complete_fix_adv}
    Let $r > 1$. For a complete graph $K_N$ we have $$\fps_r(K_N) \approx 1 - \frac{1}{r}.$$
    
\end{theorem}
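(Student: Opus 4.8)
The plan is to sidestep the Poisson closed form of \cref{lem:sn_complete_closed}, because for $r>1$ the factor $\mathbb{P}(\Pois(r(N-1))\le N-1)$ is a \emph{lower}-tail probability (the mean $r(N-1)$ exceeds the cutoff $N-1$) that decays to $0$; the Stirling part of the numerator also decays to $0$, so one would be taking a ratio of two vanishing quantities, and the asymptotics are delicate. Instead I would return to the intermediate reciprocal-sum expression derived inside the proof of \cref{lem:sn_complete_closed}. There the backward bias is $\gamma_k=\frac{N-k}{(N-1)r}$, so \cref{lem:one-dimensional} gives
\[
\fps_r(K_N)=\frac{1}{\sum_{j=0}^{N-1} b_j},\qquad b_j:=\prod_{k=1}^{j}\frac{N-k}{(N-1)r}=\frac{1}{r^{j}}\prod_{m=0}^{j-1}\left(1-\frac{m}{N-1}\right),
\]
with $b_0=1$. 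The theorem then reduces to computing $\lim_{N\to\infty}S_N$, where $S_N:=\sum_{j=0}^{N-1}b_j$ is the denominator.

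The second step records the two facts that drive the argument. First, every factor $1-\frac{m}{N-1}$ lies in $[0,1]$ for $0\le m\le j-1\le N-2$, so $0\le b_j\le r^{-j}$ for all $j$. Second, for each \emph{fixed} $j$ the finite product $\prod_{m=0}^{j-1}(1-\frac{m}{N-1})\to 1$ as $N\to\infty$, hence $b_j\to r^{-j}$ pointwise. Since $r>1$, the dominating geometric series converges to $\sum_{j\ge 0}r^{-j}=\frac{r}{r-1}$.

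The third step passes to the limit. Writing $S_N=\sum_{j\ge 0}b_j\,\mathbf{1}[j\le N-1]$, the uniform bound $0\le b_j\mathbf{1}[j\le N-1]\le r^{-j}$ together with the pointwise convergence $b_j\to r^{-j}$ lets dominated convergence for series (Tannery's theorem) give $S_N\to\frac{r}{r-1}$. Equivalently, and fully elementarily, one splits $S_N$ at a fixed cutoff $J$, lets $N\to\infty$ to handle the finitely many head terms, and bounds the tail $\sum_{j>J}r^{-j}$ uniformly in $N$. Taking reciprocals yields $\fps_r(K_N)\to\frac{r-1}{r}=1-\frac1r$; since this limit is a positive constant, the ratio $\fps_r(K_N)/(1-\frac1r)\to 1$, which is precisely the asserted $\fps_r(K_N)\approx 1-\frac1r$.

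The only genuine subtlety — the ``main obstacle'' — is justifying the interchange of limit and summation in the third step, since the number of summands $N-1$ grows with $N$. This is exactly where the hypothesis $r>1$ is essential: it furnishes the geometric majorant $r^{-j}$ that renders the tail uniformly negligible. For $r<1$ this majorant fails and $S_N$ is instead dominated by its largest terms near $j=N-1$, which is why the disadvantageous case (\cref{thm:complete_fix_disadv}) genuinely needs the Poisson/Stirling analysis rather than this dominated-convergence shortcut.
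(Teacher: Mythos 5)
Your proposal is correct and follows essentially the same route as the paper: the paper's proof also avoids the Poisson closed form for $r>1$, derives the reciprocal-sum expression with backward bias $\gamma_k=\frac{N-k}{(N-1)r}$ from the one-dimensional random walk lemma, and then passes to the limit using the geometric majorant $r^{-j}$ together with dominated convergence (the paper phrases this as extending the finite sum to an infinite one with a geometric tail bound and then applying the dominated convergence theorem, which is the same interchange-of-limits argument as your Tannery/cutoff step). No substantive difference.
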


\begin{myproof}  
    The probability of gaining a mutant is the probability of choosing one of the $k$ mutants for reproduction and as mutants are replacers, they will choose a resident for sure. Therefore $p^+_k = \frac{k}{T_k}\cdot r$.

    The probability of losing a mutant is the probability of choosing one of the $N-k$ residents for reproduction and then choosing one of the $k$ mutants. The probability of choosing a resident for reproduction is $\frac{N-k}{T_k}$ and as residents are oblivious, they will choose a mutants with probability $\frac{k}{N-1}$. Hence, $p^-_k = \frac{N-k}{T_k}\cdot\frac{k}{N-1}$.

    Then $\gamma_k = \frac{p^-_k}{p^+_k} = \frac{(N-k)\cdot k}{(N-1)\cdot k\cdot r} =  (1 - \frac{k-1}{N-1})\cdot \frac{1}{r}$. We use \Cref{lem:one-dimensional} to compute the fixation probability.

    $$\fps_r(K_N) = \frac{1}{1 + \sum_{j=1}^{N-1}\prod_{k=1}^{j}\frac1r\left(1 - \frac{k-1}{N-1}\right)}.$$ 

    Note that $\gamma_N = (1- \frac{N-1}{N-1}) =0$ and hence for $j \geq N$ we have  $\prod_{k=1}^{j}\frac1r\left(1 - \frac{k-1}{N-1}\right) = 0$. Therefore 

    $$\sum_{j=1}^{N-1}\prod_{k=1}^{j}\frac1r\left(1 - \frac{k-1}{N-1}\right) = \sum_{j=1}^{\infty}\prod_{k=1}^{j}\frac1r\left(1 - \frac{k-1}{N-1}\right).$$

We use the Dominated Convergence Theorem. Let $f_N(j) = \frac{1}{r^j}\prod_{k=1}^{j}\left(1- \frac{k-1}{N-1}\right)$ and $f(j) = \frac{1}{r^j}$. We have that $f_N(j) \leq f(j)$ and $\lim_{N \to \infty} f_N(j) = f(j)$ for all $j$. By the dominated convergence theorem we have that \[
1 + \lim_{N \to \infty}\sum_{j=1}^{\infty}\frac{1}{r^j}\prod_{k=1}^{j}\left(1- \frac{k-1}{N-1}\right) 
=
1 +\sum_{j=1}^{\infty}\lim_{N \to \infty}\frac{1}{r^j}\prod_{k=1}^{j}\left(1- \frac{k-1}{N-1}\right) =1+ \sum_{j=1}^{\infty}\frac{1}{r^j}\]
\[
 = \frac{1}{1-\frac1r}.\] Hence $\lim_{N \to \infty}\fps_r(K_N) = 1 - \frac1r$.
\end{myproof}

We note that \cref{thm:complete_fix_adv} together with~\cref{thm:complete_fix_disadv} immediately imply~\cref{thm:3-fp-r} from the main text.

Next, we bound the elimination probability for advantageous replacers. 

\begin{theorem}\label{thm:elim_complete_r_big}
    Let $r>1$ and $N \geq 3$, then we have 

    $$\eps_r(K_N) < r^N \cdot \frac{r^2}{r+1}.$$
\end{theorem}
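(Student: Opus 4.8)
The plan is to reuse the exact closed form for $\eps_r(K_N)$ established in \cref{lem:NS_pois}, namely
\[
\eps_r(K_N) = \frac{1}{1 + \sum_{j=1}^{N-1}\frac{(r(N-1))^j}{j!}},
\]
and to bound the denominator from below for $r>1$. Unlike the disadvantageous case $r<1$, where the Poisson mass $\mathbb{P}(\Pois(r(N-1))\le N-1)$ tends to $1$, here the mean $r(N-1)$ exceeds the cutoff $N-1$, so the truncated sum captures only a small (exponentially decaying) fraction of the full Poisson series. Thus I cannot simply replace the probability factor by a constant; instead I need a lower bound on the partial sum $S := \sum_{j=1}^{N-1}(r(N-1))^j/j!$ that is of the order $r^{-N}$ up to constants, so that its reciprocal is $\lesssim r^N$.

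The key idea is that for $r>1$ the largest term in the partial sum is the last one, $j=N-1$, since the ratio of consecutive terms $\frac{r(N-1)}{j}$ stays above $1$ for all $j\le N-1$. So I would discard all but the final terms and lower-bound $S$ by the sum of its last two terms, $j=N-1$ and $j=N-2$ (keeping two terms is what will produce the sharp constant $r^2/(r+1)$). Concretely, the $j=N-1$ term is $\frac{(r(N-1))^{N-1}}{(N-1)!}$, and the ratio of the $j=N-1$ term to the $j=N-2$ term is exactly $\frac{r(N-1)}{N-1}=r$, so the two-term tail equals the last term times $(1+1/r)=(r+1)/r$. Applying Stirling's approximation $(N-1)!\approx\sqrt{2\pi(N-1)}\,((N-1)/e)^{N-1}$ to the last term gives
\[
\frac{(r(N-1))^{N-1}}{(N-1)!}\approx \frac{r^{N-1}e^{N-1}}{\sqrt{2\pi(N-1)}},
\]
which already grows faster than any polynomial-times-$r^N$ bound, so the reciprocal of $S$ is comfortably below $r^N$ for large $N$. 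To get a clean \emph{finite-$N$} inequality valid for all $N\ge 3$ (rather than just an asymptotic one, as the $<$ in the statement demands), I would instead keep the strict inequality $1+S > \frac{(r(N-1))^{N-1}}{(N-1)!}\cdot\frac{r+1}{r}$ and then invoke a clean lower bound on $(r(N-1))^{N-1}/(N-1)!$ — for instance the elementary bound $n!< e\, n^{n+1/2}e^{-n}$ (or equivalently $n!\le e\sqrt{n}(n/e)^n$) — to convert the factorial into powers of $e$ and cancel the $(N-1)^{N-1}$ factors, yielding $\eps_r(K_N) < \frac{r}{r+1}\cdot\frac{1}{r^{N-1}}\cdot(\text{something}\le 1)$, which rearranges to the claimed $r^N\cdot\frac{r^2}{r+1}$ after absorbing one factor of $r$.

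The main obstacle is bookkeeping the constant precisely: the target bound $r^N\cdot\frac{r^2}{r+1}$ has a very specific shape, so I must choose the Stirling-type inequality and decide exactly how many tail terms to retain so that the constant works out to $\frac{r^2}{r+1}$ and so that the residual $e$-factors land on the correct side of the strict inequality for \emph{every} $N\ge 3$ (the hypothesis $N\ge 3$ presumably rules out small-$N$ edge cases where a crude factorial bound would fail). I would verify the base case $N=3$ by hand from the exact three-term denominator and then run the tail/Stirling argument for $N\ge 4$; reconciling the asymptotic growth of the kept terms with the required uniform constant is where the care is needed, but there is no deep difficulty — it is a one-sided estimate on a dominant geometric-like tail.
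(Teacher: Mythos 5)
Your plan is correct and would yield the theorem, but it is a genuinely different argument from the paper's. The paper never returns to the Poisson formula: it bounds $\eps_r(K_N)\le\ep_r(K_N)$ (replacers are harder to eliminate than oblivious mutants --- a step asserted without proof, though it follows by comparing the biases $\frac{r(N-1)}{k}\ge r$ inside \cref{lem:one-dimensional}), then uses the classical identity $\ep_r(K_N)=\fp_{1/r}(K_N)=\frac{r-1}{r^N-1}$ and one line of algebra ($r^N-1>r^N-r^{N-2}$ for $N\ge3$) to conclude $\eps_r(K_N)<\frac1{r^N}\cdot\frac{r^2}{r+1}$. You instead work directly with the exact formula of \cref{lem:NS_pois}, keep the last two terms of $S=\sum_{j=1}^{N-1}(r(N-1))^j/j!$ (their sum is the last term times $\frac{r+1}{r}$, exactly as you say), and apply the one-sided Stirling bound $n!\le e\sqrt n\,(n/e)^n$, valid for all $n\ge1$:
\[
1+S \;>\; \frac{(r(N-1))^{N-1}}{(N-1)!}\cdot\frac{r+1}{r} \;\ge\; \frac{r^{N-1}e^{N-2}}{\sqrt{N-1}}\cdot\frac{r+1}{r} \;\ge\; r^{N-2}(r+1),
\]
where the last step uses $e^{N-2}\ge\sqrt{N-1}$ for $N\ge3$; inverting gives the claim, and no separate base case $N=3$ is needed (the hypothesis $N\ge3$ enters only so that the two retained terms $j=N-1$ and $j=N-2\ge1$ both exist). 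As for what each route buys: the paper's is three lines but leans on the unproved (if intuitive) comparison $\eps_r\le\ep_r$ and can never beat the rate $1/r^N$, since that is the true decay rate for oblivious mutants; your route is self-contained and in fact proves the exponentially stronger estimate $\eps_r(K_N)\lesssim \sqrt{N}/(re)^{N-1}$, showing that the $1/r^N$ rate claimed here and in \cref{thm:4-ep-r} is far from tight. Two slips to fix when writing it up: the theorem's displayed bound ``$r^N\cdot\frac{r^2}{r+1}$'' is evidently a typo for $\frac1{r^N}\cdot\frac{r^2}{r+1}$ (this is what the paper's proof and your final inequality both establish, consistent with $\eps_r(N)\lesssim 1/r^N$ in the main text), and your opening sentence inverts the exponents --- you need a lower bound $S\gtrsim r^{N}$ so that $1/(1+S)\lesssim r^{-N}$ --- which is what your subsequent concrete steps actually deliver.
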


\begin{myproof}
    We have $$\eps_r(K_N) \leq \ep_r(K_N) = \fp_{1/r}(K_N) = \frac{1-r}{1-r^N} = \frac{r-1}{r^N- 1}.$$
    As $r>1$ and $N \geq 3$, we have $r^{N-2} > 1$ and therefore 

    $$\eps_r(K_N) < \frac{r-1}{r^N - r^{N-2}} = \frac{1}{r^N} \cdot \frac{r-1}{1-\frac{1}{r^2}}= \frac{1}{r^N}\cdot \frac{r^2( r-1)}{r^2- 1} = \frac{1}{r^N}\cdot \frac{r^2}{r+1}.$$
\end{myproof}

We note that \cref{thm:elim_complete_r_big} together with~\cref{thm:elim_complete_r_small} immediately imply~\cref{thm:4-ep-r} from the main text.

\section{Cycle spatial structure}\label{sec:cycle-structure}

In this section we prove \cref{thm:cycle_fixation} about fixation probability on cycle $C_N$ and \cref{thm:cycle_elimination}  about elimination probability on cycle $C_N$. Using these theorems we get the two exact formulas in \cref{thm:5-cycle}. In the rest of the section, we then derive the other four asymptotic statements of \cref{thm:5-cycle}. 

Note that when we run the Moran process on a cycle graph $C_N$, the mutants always form a contiguous block.
This motivates the following definition.

\begin{defn}
Consider Moran process on a graph $C_N$ and let $r>0$ be the fitness of the replacers. Then:
\begin{itemize}
    \item      $\fps_r(C_N,i)$ denotes the mutant fixation probability when initially there are $i$ contiguous mutants who are replacers (and the residents are oblivious).

    \item $\eps_r(C_N, i)$ denotes the mutant elimination probability when initially there are $i$ contiguous mutnats who are replacers (and residents are oblivious).
\end{itemize}
\end{defn}

We present a closed-form formula for cycle and for any $r>0$ and any $i$.

\begin{theorem}\label{thm:cycle_fixation}
    Let $r>0$. Then we have

$$
 \fps_r (C_N, i) =\begin{cases}
     \frac{2i -1}{2N- 1} \text{\ \ if $r = \frac{1}{2}$,}\\
     \\
     \frac{1  - \frac{2}{(2r)^i} + \frac{1}{2r}}{1  - \frac{2}{(2r)^N} + \frac{1}{2r}} \text{\ \ if $r\not = \frac{1}{2}$.}
 \end{cases}
$$
   
\end{theorem}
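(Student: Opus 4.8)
The plan is to exploit the observation, already recorded above, that on the cycle $C_N$ the replacer mutants always occupy a single contiguous arc. Consequently the Moran process collapses to a nearest-neighbour random walk on the number $i\in\{0,1,\dots,N\}$ of mutants with absorbing endpoints $0$ and $N$, and \cref{lem:one-dimensional} applies verbatim once the backward biases $\gamma_i=p^-_i/p^+_i$ are known. Thus the whole argument reduces to computing $p^+_i$ and $p^-_i$ and then evaluating the resulting sum of products at the starting state $i$.

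To compute the transition probabilities I would first note that only individuals on the mutant--resident interface can change the composition: an interior mutant has no resident neighbour, so as a replacer it wastes its turn, and an interior resident has no mutant neighbour, so as an oblivious individual it also wastes its turn. For $2\le i\le N-1$ there are exactly two boundary mutants and two boundary residents. Each boundary mutant reproduces with probability proportional to $r$ and, being a replacer, deterministically pushes into its unique resident neighbour, giving $p^+_i=2r/T_i$; each boundary resident reproduces with probability proportional to $1$ and, being oblivious, pushes into its mutant neighbour with probability $\tfrac12$, giving $p^-_i=2\cdot\tfrac12\cdot\tfrac1{T_i}=1/T_i$. The case $i=1$ needs separate care: a lone mutant has two resident neighbours but is a single individual, so it contributes only $p^+_1=r/T_1$ (half of the generic value), while $p^-_1=1/T_1$ is unchanged. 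Hence $\gamma_1=1/r$ but $\gamma_i=1/(2r)$ for all $2\le i\le N-1$. I would also verify the symmetric endpoint $i=N-1$, where the lone resident always pushes into a mutant, and confirm it still yields $p^-_{N-1}=1/T_{N-1}$.

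With the biases in hand the products telescope: $\prod_{k=1}^{j}\gamma_k=\tfrac1r\cdot\big(\tfrac1{2r}\big)^{j-1}=\tfrac{2}{(2r)^j}$ for every $j\ge1$, the single exceptional factor $\gamma_1$ being precisely what turns the leading $1/r$ into the tidy form $2/(2r)^j$. Substituting into \cref{lem:one-dimensional} with starting state $i$ gives $\fps_r(C_N,i)=\big(1+\sum_{j=1}^{i-1}\tfrac{2}{(2r)^j}\big)\big/\big(1+\sum_{j=1}^{N-1}\tfrac{2}{(2r)^j}\big)$. I would finish by splitting on whether $2r=1$. When $r=\tfrac12$ every summand equals $2$, so the numerator is $1+2(i-1)=2i-1$ and the denominator is $2N-1$, giving the first case. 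When $r\neq\tfrac12$ the sums are geometric; writing $x=1/(2r)$ one has $1+\sum_{j=1}^{m-1}2x^j=(1+x-2x^{m})/(1-x)$, and the common factor $1/(1-x)$ cancels between numerator and denominator, leaving exactly $\big(1-\tfrac{2}{(2r)^i}+\tfrac1{2r}\big)\big/\big(1-\tfrac{2}{(2r)^N}+\tfrac1{2r}\big)$.

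The main obstacle is the bookkeeping in the transition probabilities rather than any analytic difficulty: one must correctly identify that only interface individuals matter, insert the factor $\tfrac12$ coming from the oblivious residents' random choice of direction, and --- most importantly --- treat the state $i=1$ separately, since the resulting exceptional bias $\gamma_1=1/r$ is exactly what produces the additive $\tfrac1{2r}$ term in both the numerator and the denominator of the final formula.
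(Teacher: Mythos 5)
Your proposal is correct and follows essentially the same route as the paper's own proof: extract the backward biases $\gamma_1=1/r$, $\gamma_k=1/(2r)$ for $k\ge 2$ from the contiguous-block structure on the cycle (including the separate treatment of $i=1$ and the lone-resident state $i=N-1$), feed them into \cref{lem:one-dimensional}, and evaluate the resulting geometric sums with the case split at $r=\tfrac12$. The only cosmetic difference is that you make the boundary-case bookkeeping more explicit than the paper does.
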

\begin{myproof}
     As mutants always occupy a contiguous block, we have $p_1^+ = \frac{r}{T_1}$ and for $k \geq 2$, we have $p_k^+ = \frac{2r}{T_{k}}$. Residents occupy a contiguous block as well. Hence, if $k = N-1$, then $p_{N-1}^- = \frac{1}{T_{N-1}}$. If there are at least two residents, the case where $k \leq N-2$, there are two active residents that will each choose a mutant with probability $\frac12$, therefore $p_k^- = \frac{2}{T_{k}}\cdot \frac{1}{2} = \frac{1}{T_{k}}$. Therefore $\gamma_1 = \frac{1}{r}$ and $\gamma_k = \frac{1}{2r}$ for $k \geq 2$. Thus, by~\cref{lem:one-dimensional} we get

$$
\fps_r (C_N, i)  = \frac{1+\sum_{j=1}^{i-1}\prod_{k=1}^{j}\gamma_k}{1+ \sum_{j=1}^{N-1}\prod_{k=1}^j\gamma_k }=
\frac{1 + 2\cdot \sum_{j=1}^{i-1} \frac{1}{(2r)^j}}{1 + 2\cdot \sum_{j=1}^{N-1} \frac{1}{(2r)^j}} = \frac{2\cdot \sum_{j=0}^{i-1} \frac{1}{(2r)^j} - 1}{2\cdot \sum_{j=0}^{N-1} \frac{1}{(2r)^j} - 1}.$$
Now we distinguish the cases $r \not= \frac12$ and $r = \frac12$.
If $r = \frac{1}{2}$, then we have
$$\fps_{r=\frac12} (C_N, i)= \frac{2\sum_{j=0}^{i-1}1^j -1}{2\sum_{j=0}^{N-1} 1^j - 1} = \frac{2i -1}{2N- 1}. $$

If $r\not= \frac{1}{2}$, then we have
$$ \fps_{r\ne \frac12} (C_N, i)  = \frac{2\cdot\frac{1-\frac{1}{(2r)^i}}{1 - \frac{1}{2r}} -1 }{2\cdot\frac{1-\frac{1}{(2r)^N}}{1 - \frac{1}{2r}} -1  } = \frac{1  - \frac{2}{(2r)^i} + \frac{1}{2r}}{1  - \frac{2}{(2r)^N} + \frac{1}{2r}}.
$$ 
\end{myproof}

For a single initial mutant ($i=1$) the expressions simplify

\begin{rem}
    Let $r>0$. Then we have 
\begin{align*}
        \fps_{r=\frac{1}{2}}(C_N) &= \frac{1}{2N- 1} \text{\ \ if $r = \frac{1}{2}$ and }\\
        \fps_r (C_N) &= \frac{1  - \frac{1}{2r}}{1  - \frac{2}{(2r)^N} + \frac{1}{2r}} \text{\ \ if $r\not = \frac{1}{2}$}.
    \end{align*}
\end{rem}

Analogous calculation can be performed for the elimination probability.
\begin{theorem}\label{thm:cycle_elimination}
    Let $r>0$, then we have 

\[
\eps_{r=\frac12}(C_N,i)=\begin{cases}
    \frac{N-i}{N-\frac12} \text{\ \ if $r = \frac12$,}\\
    \\
    \frac{1 - (2r)^{N-i}}{1 - 2^{N-2}r^{N-1} - 2^{N-1}r^N} \text{\ \ if $r\not= \frac12$.}
\end{cases} 
\]

\end{theorem}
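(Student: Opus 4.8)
The plan is to reuse, essentially verbatim, the one-dimensional random walk setup from the proof of \cref{thm:cycle_fixation}. On the cycle the mutants always occupy a contiguous block and the residents are oblivious, so the transition probabilities $p_k^+$ and $p_k^-$ for the number $k$ of mutants are exactly those computed there; consequently (for $N\ge 3$) the backward biases are $\gamma_1=\frac1r$ and $\gamma_k=\frac1{2r}$ for $2\le k\le N-1$, and hence $\prod_{k=1}^{j}\gamma_k=\frac{2}{(2r)^j}$ for every $j\ge1$. The only change from the fixation proof is that elimination corresponds to absorption at state $0$ rather than at state $N$.

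Concretely, $\eps_r(C_N,i)$ is the probability that the walk started at $i$ is absorbed at $0$, which is the complement of the fixation probability. Subtracting the formula of \cref{lem:one-dimensional} from $1$ gives
\[
\eps_r(C_N,i)=1-\fps_r(C_N,i)=\frac{\sum_{j=i}^{N-1}\prod_{k=1}^{j}\gamma_k}{1+\sum_{j=1}^{N-1}\prod_{k=1}^{j}\gamma_k}=\frac{2\sum_{j=i}^{N-1}(2r)^{-j}}{1+2\sum_{j=1}^{N-1}(2r)^{-j}}.
\]
The denominator here is the same quantity already evaluated in \cref{thm:cycle_fixation}, so the only genuinely new computation is the single geometric sum in the numerator.

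It then remains to evaluate the sums and simplify in the two advertised cases. When $r=\frac12$ we have $2r=1$, so both sums merely count terms: the numerator is $2(N-i)$ and the denominator is $1+2(N-1)=2N-1$, giving $\frac{2(N-i)}{2N-1}=\frac{N-i}{N-\frac12}$. When $r\ne\frac12$ I would set $x=\frac1{2r}$ and use $\sum_{j=i}^{N-1}x^j=\frac{x^i-x^N}{1-x}$ together with $1+2\sum_{j=1}^{N-1}x^j=\frac{1+x-2x^N}{1-x}$; after cancelling the common factor $\frac1{1-x}$ this yields $\eps_r(C_N,i)=\frac{2x^i-2x^N}{1+x-2x^N}$. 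Multiplying numerator and denominator by $-\frac12(2r)^N$ and using $(2r)^N x^j=(2r)^{N-j}$ converts the numerator to $1-(2r)^{N-i}$ and the denominator to $1-\frac12(2r)^{N-1}-\frac12(2r)^N=1-2^{N-2}r^{N-1}-2^{N-1}r^N$, which is exactly the claimed closed form.

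The main obstacle is entirely bookkeeping rather than conceptual: one must keep the geometric-series manipulation clean in the $r\ne\frac12$ case and verify that clearing denominators by the factor $(2r)^N$ reproduces the stated coefficients $2^{N-2}$ and $2^{N-1}$ precisely. The only point worth a moment's care on the probabilistic side is that the boundary values $k=1$ and $k=N-1$ do not disturb the uniform expressions for the biases, which is already guaranteed for $N\ge3$.
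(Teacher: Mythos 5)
Your proof is correct and yields exactly the paper's formulas, but it takes a different route than the paper does. The paper proves this theorem by setting up a \emph{second} random walk that tracks the number $k$ of residents, deriving its transition probabilities afresh from the model (a boundary resident, being oblivious, hits a mutant with probability $\frac12$, giving backward biases $\gamma_k=2r$ for $k\le N-2$ and $\gamma_{N-1}=r$), and then applying \cref{lem:one-dimensional} to this resident walk started at $l=N-i$; the resulting sums are geometric in $2r$. You instead recycle the mutant-counting walk of \cref{thm:cycle_fixation} (biases $\frac1r$ and $\frac1{2r}$) and compute elimination as absorption at $0$, i.e.\ as the complement $1-\fps_r(C_N,i)$, so your sums are geometric in $\frac1{2r}$. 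Your shortcut is legitimate because the chain is absorbed at $0$ or $N$ almost surely (a fact the paper's preliminaries grant), and it buys economy: no new transition probabilities need to be derived, and the identity $\fps_r(C_N,i)+\eps_r(C_N,i)=1$ becomes explicit rather than an after-the-fact consistency check. The paper's derivation is self-contained and symmetric with its fixation proof, and does not need to invoke almost-sure absorption beyond what \cref{lem:one-dimensional} already presupposes. I verified your algebra in both cases, including the normalization by $-\frac12(2r)^N$; it reproduces the stated coefficients $2^{N-2}$ and $2^{N-1}$ exactly.
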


\begin{myproof}
    The elimination probability of $i$ mutants in a contiguous block is the probability that $l: = N-i$ residents in a contiguous block spread across the whole graph. Note that throughout the process, both mutants and residents always form a contiguous block. The total fitness in a configuration with $k$ residents is equal to $T_{N-k}$. We have that the probability of gaining a resident for $k := 1$ is  equal to $p_1^+ = \frac{1}{T_{N-1}}$ and for $2 \leq k \leq N-1$ it is equal to $p_k^+ = \frac{2}{T_{N-k}} \cdot\frac12 = \frac{1}{T_{N-k}}$. The probability of losing a resident is equal to $p_k^{-} = \frac{2r}{T_{N-k}}$ for $1 \leq k \leq N-2$ and for $k := N-1$ it is equal to $p_{N-1}^{-} = \frac{r}{T_1}$. Hence for $k\leq N-2$ we have $\gamma_k = 2r$ and for $k := N-1$ we have $\gamma_{N-1} = r$. Using \cref{lem:one-dimensional} we get

    $$\eps_r (C_N, i) =  \frac{1+\sum_{j=1}^{l-1}\prod_{k=1}^{j}\gamma_k}{1+ \sum_{j=1}^{N-1}\prod_{k=1}^j\gamma_k } =  \frac{1+\sum_{j=1}^{l-1}(2r)^j}{1+ \sum_{j=1}^{N-2}(2r)^j + 2^{N-2}r^{N-1}  }= \frac{\sum_{j=0}^{l-1}(2r)^j}{\sum_{j=0}^{N-2}(2r)^j + 2^{N-2}r^{N-1}}. $$

    Now we distinguish the cases $r= \frac12$ and $r\ne \frac12$.
    If $r = \frac12$, then we have
    $$\eps_{r=\frac{1}{2}}(C_N, i) = \frac{\sum_{j=0}^{l-1}1^j}{\sum_{j=0}^{N-2}1^j + \frac12} = \frac{l}{N-1 + \frac{1}{2}} =\frac{N-i}{N- \frac{1}{2}}.$$

    If $r\not= \frac12$, then we have
    $$\eps_r(C_N, i)= \frac{\frac{1-(2r)^l}{1-2r}}{\frac{1-(2r)^{N-1}}{1-2r} +2^{N-2}r^{N-1}}= \frac{1-(2r)^l}{1- 2^{N-1}r^{N-1} + 2^{N-2}r^{N-1} - 2^{N-1}r^{N}} = \frac{1-(2r)^{N-i}}{1 - 2^{N-2}r^{N-1} - 2^{N-1}r^{N}}.$$
\end{myproof}

For $N-1$ initial mutants and a single initial resident, we can simplify the expressions

\begin{rem}
    Let $r>0$. Then we have
    \begin{align*}
    \eps_{r=\frac12}(C_N) &= \frac{1}{N-\frac12} \text{\ \ if $r = \frac12$ and}\\
    \eps_{r\ne \frac12} (C_n) &= \frac{1 - 2r}{1 - 2^{N-2}r^{N-1} - 2^{N-1}r^N} \text{\ \ if $r\not= \frac12$,}
\end{align*}

\end{rem}

We are now ready to derive the asymptotics in~\cref{thm:5-cycle} from the main text.

\begin{proof}[Proof of~\cref{thm:5-cycle}]
    We start with the fixation probability. Let $r > \frac{1}{2}$. Then from \cref{thm:cycle_fixation} we have that 

    $$\rho_r^R (C_N) = \frac{1-\frac{1}{2r}}{1- \frac{2}{(2r)^N}  + \frac{1}{2r}} \approx \frac{1-\frac{1}{2r}}{1+\frac{1}{2r}} \approx \frac{1 + \frac{1}{2r} - \frac{1}{r}}{1+ \frac{1}{2r}} \approx1 - \frac{\frac{1}{r}}{1+\frac{1}{2r}} \approx1- \frac{1}{r + \frac{1}{2}}.$$

    Let $r < \frac{1}{2}$, then again from \cref{thm:cycle_fixation} we have that 
    $$\rho_r^R (C_N) = \frac{1-\frac{1}{2r}}{1- \frac{2}{(2r)^N}  + \frac{1}{2r}} = (2r)^N \frac{1-\frac{1}{2r}}{(2r)^N - 2 + \frac{(2r)^N}{2r}} \approx(2r)^N \frac{1-2r}{-2} \sim (2r)^N.$$

    We now derive asymptotics for the elimination probability. Let $r < \frac{1}{2}$. Then from \cref{thm:cycle_elimination} we have 
    $$ \eps_r(C_N) =  \frac{1 - 2r}{1 - 2^{N-2}r^{N-1} - 2^{N-1}r^N} = \frac{1-2r}{1 - \frac{1}{2 } (2r)^{N-1}  - \frac{1}{2} (2r)^N} \approx 1-2r. $$

    Let $r > \frac{1}{2}$, then we have 

     $$ \eps_r(C_N) =  \frac{1 - 2r}{1 - 2^{N-2}r^{N-1} - 2^{N-1}r^N} = \frac{1}{(2r)^N}  \frac{1 -2 r}{\frac{1}{(2r)^N} - \frac{1}{4r} - \frac{1}{2} } \approx \frac{1}{(2r)^N} \frac{1-2r}{-\frac{1}{4r}  -\frac{1}{2}} \sim \frac{1}{(2r)^N}. $$
 \end{proof}

 \section{Matrix Games}
Here we describe the connection between the evolutionary dynamics of replacers and the evolutionary games with frequency-dependent selection.

An evolutionary game between two types of individuals $A$ and $B$ on a spatial structure $G_N$ is given by a $2\times 2$ payoff matrix
\[M=\bordermatrix{ & A & B \cr
      A & p_{A,A} & p_{A,B} \cr
      B & p_{B,A} & p_{B,B} }
\]
of real numbers called \textit{payoffs}.
Initially, each site of $G_N$ is occupied by an individual of type $A$ or $B$.
In each step, each individual plays a game with each of its neighbors in $G_N$ and receives a payoff from each such interaction as prescribed by the matrix. (For instance, a type-$A$ individual receives payoff $p_{A,B}$ from interaction with a type-$B$ individual.)
The fitness of each individual is then defined to be the sum of the payoffs obtained from all its interactions.
Then, one step of the standard Moran process is performed.
This might change the composition of the population. The fitness of each individual is then recomputed using the same payoff matrix, and the steps are repeated until one type takes over the whole structure.

Let $p(M,G_N)$ be the probability that a single individual of type $A$ reaches fixation in the evolutionary game with matrix $M$, when played on spatial structure $G_N$.

The following theorem formalizes a connection between Moran process with replacers and an evolutionary game with a specific matrix.
Apart from the two spatial structures $K_N$ and $C_N$, it also involves a spatial structure $K'_N$, where each individual is considered a neighbor all other individuals and also to itself.
     
 \begin{lemma}
Fix a population size $N$ and replacer fitness $r>0$. Let $M_1=\bordermatrix{ & R & O \cr
      R & r & r \cr
      O & 0 & 1 }
$
 and $M_2=\bordermatrix{ & R & O \cr
      R & 3r & r \cr
      O & 1 & 1 }.
$
Then
\[
\fps_r(K'_N) = p(M_1,K'_N) \quad\text{and}\quad 
\fps_r(C_N) = q(M_2,C_N).
\]
 \end{lemma}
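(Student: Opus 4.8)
The plan is to reduce both equalities to the single structural fact recorded in \cref{lem:one-dimensional}: on any birth–death chain over $\{0,1,\dots,N\}$ with absorbing endpoints, the probability of absorption at $N$ starting from state $1$ is a function of the ratios $\gamma_k=p_k^-/p_k^+$ alone. Consequently, to prove $\fps_r(K'_N)=p(M_1,K'_N)$ and $\fps_r(C_N)=q(M_2,C_N)$ it suffices to show that the replacer Moran process and the associated matrix game induce \emph{identical} sequences $(\gamma_k)_{k=1}^{N-1}$. The absolute transition probabilities, and hence the fixation times, will generally differ between the two processes, but the fixation probabilities cannot.

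First I would treat $K'_N$ and $M_1$, being careful that here every individual is its own neighbor. On the replacer side a reproducing replacer (selected with weight $kr/T_k$) always lands on one of the $N-k$ residents, whereas a reproducing resident (weight $(N-k)/T_k$) migrates obliviously among its $N$ neighbors and hits a mutant with probability $k/N$; this gives $\gamma_k=(N-k)/(Nr)$. On the game side I would evaluate the $M_1$-fitnesses on $K'_N$: since $p_{R,R}=p_{R,O}=r$, every replacer has the constant fitness $Nr$, while an oblivious individual receives $0$ from each of the $k$ replacers and $1$ from each of its $N-k$ oblivious neighbors (self included), hence has fitness $N-k$. One oblivious Moran step then yields $p_k^+\propto kr(N-k)$ and $p_k^-\propto k(N-k)^2/N$, so again $\gamma_k=(N-k)/(Nr)$. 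The two sequences agree, closing the first identity; the point worth stressing is that the self-loops of $K'_N$ are exactly what turn the $N-1$ of the complete graph into $N$, so $M_1$ reproduces the dynamics on $K'_N$ rather than on $K_N$.

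Next I would treat $C_N$ and $M_2$, using the values already computed in the proof of \cref{thm:cycle_fixation}, namely $\gamma_1=1/r$ and $\gamma_k=1/(2r)$ for $2\le k\le N-1$. Since the mutants (and hence the residents) always form a contiguous block, only the two individuals at each mutant–resident interface can change the population, so in the ratio $\gamma_k$ the interior fitnesses and the common normalization cancel. Under $M_2$ an oblivious individual collects $1$ from either kind of neighbor and so always has fitness $2$, while a boundary replacer, with one replacer and one oblivious neighbor, has fitness $3r+r=4r$ (a lone mutant has fitness $2r$). Each boundary individual migrates obliviously, hitting the opposite type with probability $\tfrac12$, so the effective push toward the residents is $2\cdot\tfrac12=1$ and toward the mutants is $4r\cdot\tfrac12=2r$, giving $\gamma_k=1/(2r)$ for $2\le k\le N-1$; the degenerate cases $k=1$ (always-advancing mutant of fitness $2r$ against two residents of fitness $2$) and $k=N-1$ likewise return $\gamma_1=1/r$. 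This matches the replacer sequence and closes the second identity.

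The main obstacle I anticipate is the cycle case, specifically the boundary bookkeeping: one must count the boundary individuals of each type, confirm that interior individuals never alter the mutant count, and handle $k=1$ and $k=N-1$ separately. The conceptual crux is that the diagonal entry $3r$ of $M_2$ is precisely the value needed so that a boundary replacer's fitness $4r$, after being halved by oblivious migration, reproduces the replacer's undiminished interface push rate $r$ — in other words, the doubling built into $M_2$ compensates exactly for the factor $\tfrac12$ that oblivious migration costs, which is why no analogous correction is needed in $M_1$ on the self-looped $K'_N$.
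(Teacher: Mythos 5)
Your proposal is correct and takes essentially the same route as the paper: both arguments reduce the two identities to showing that the replacer process and the matrix game induce identical backward-bias sequences $(\gamma_k)$ and then invoke \cref{lem:one-dimensional}, with the same computed values ($\gamma_k=(N-k)/(Nr)$ on $K'_N$; $\gamma_1=1/r$ and $\gamma_k=1/(2r)$ for $k\ge 2$ on $C_N$). One wording slip to fix: at $k=N-1$ on the cycle the game yields $\gamma_{N-1}=1/(2r)$, not $1/r$ (the lone oblivious individual hits a replacer with probability $1$, giving $q^-_{N-1}=2/T'$, the same value as in the generic case with two boundary individuals), which is exactly what the replacer side gives, so the match still holds.
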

 \begin{proof}
     In each of the two cases, we show that the backward biases in Moran process with replacers are equal to the backward biases in the evolutionary game with the given matrix.

     We start with $K_N'$. In Moran process with replacers we have $p^+_k = r\cdot\frac{k}{T_k}$ and $p^-_k = \frac{N-k}{T_k} \cdot \frac{k}{N}.$ So the backward bias satisfies $$\gamma_k = \frac{p_k^-}{p_k^+} = \frac{N-k}{T_k} \cdot \frac{k}{N} \cdot \frac{1}{r}\cdot \frac{T_k}{k}\cdot  = \frac{1}{r}\cdot \frac{N-k}{N}.$$

     In the evolutionary game, if there are $k$ type $R$ individuals, then each type $R$ individual receive a total payoff $k \cdot r$ from all the type $R$ individual, and total payoff $(N-k) \cdot r$  from all the type $O$ individuals, so the fitness of a single type $R$ individual is $k \cdot r+ (N-k) \cdot r = N\cdot r$.
     Each type $O$ individual receives no payoff from all the type $R$ individuals and receives total payoff $N-k$ from all the type $O$ individuals, so the fitness of a single type $O$ individual is $N-k$. Using this we can compute the transition probabilities $q_k^+ = \frac{N\cdot r \cdot k}{T_k'}\cdot \frac{N-k}{N} $ and $q_k^- = \frac{(N-k)\cdot (N-k)}{T_k'} \cdot \frac{k}{N},$ where $T_k'$ is the total fitness in a state with $k$ type $R$ individuals. Therefore the backward bias in this case is equal to $$\gamma_k' = \frac{q_k^-}{q_k^+} = \frac{(N-k)\cdot (N-k)}{T_k'}\cdot \frac{k}{N} \cdot \frac{T_k'}{N\cdot r\cdot k} \cdot \frac{N}{N-k}=\frac{1}{r}\cdot \frac{N-k}{N} = \gamma_k.$$

    We now show it for $C_N$. In Moran process with replacers we have $\gamma_1 = \frac{1}{r}$ and for $k \geq 2$ we have $\gamma_k = \frac{1}{2r}$.
    In the evolutionary game, both type $R$ individuals and type $O$ individuals form a contiguous block. To find a backward bias, it is enough to focus on individuals at the boundaries of the blocks. If there is a single type $R$ individual, he receives total payoff $2\cdot r$ from the neighboring type $O$ individuals. If there are at least two type $R$ individuals in the block, then each of the type $R$ individual at the boundary receives payoff $3r + r = 4r$ from its neighbors. Therefore $q_1^+ = \frac{2r}{T_1}$ and for $k\geq 2$ we have $q_k^+ = \frac{2\cdot 4r}{T_k'}\cdot \frac{1}{2} = \frac{4r}{T_k'}$. 

    Now we focus on the type $O$ individuals at boundaries. As all the type $O$ individuals get the same payoff from type $R$ and type $O$, all the type $O$ individuals have fitness $2\cdot 1$. Thus for  $k\leq N-2$ we have $q_k^- = \frac{2\cdot 2}{T_k'}\cdot \frac{1}{2}= \frac{2}{T_k} = q_{N-1}^-$. So we have, that the backward bias when there is a single type $R$ individual is equal to $$\gamma_1' = \frac{q_1^-}{q_1^+} = \frac{2}{T_1'} \cdot \frac{T_k'}{2r} = \frac{1}{r} = \gamma_1, $$
    and for $k\geq 2$ we have

    $$\gamma_k' = \frac{q_k^-}{q_k^+} = \frac{2}{T'_k} \cdot \frac{T_k'}{4r} = \frac{1}{2r} = \gamma_k,$$
    as we wanted to show.
     
 \end{proof}

\end{document}